\documentclass{article}
\usepackage{graphicx} 

\usepackage[utf8]{inputenc}
\usepackage[T1]{fontenc}
\usepackage[english]{babel}
\usepackage{amsthm}
\usepackage{amsmath}
\usepackage{enumitem}
\usepackage{amssymb}
\usepackage{enumitem}
\usepackage{graphicx}
\usepackage{caption}    
\usepackage{subcaption} 
\usepackage{multirow}
\usepackage{makecell}
\usepackage{braket}
\usepackage{placeins}
\usepackage[hidelinks]{hyperref}

\newtheorem{theorem}{Theorem}[section]
\newtheorem{lemma}{Lemma}[section]
\newtheorem{proposition}{Proposition}[section]

\DeclareMathOperator\supp{supp}

\title{New Methods for Constructing Classical and Quantum Codes from Graphs and Matroids}
\author{Anderson S. Barbosa$^{\dagger}$ \and Franklin de L. Marquezino$^{\ddagger}$ \and Giuliano G. La Guardia$^{\star}$\\[2em]
$^{\dagger\ddagger}$Universidade Federal do Rio de Janeiro, Brazil\\
$^{\star}$Universidade Estadual de Ponta Grossa, Brazil\\[2em]
anderson@cos.ufrj.br, franklin@cos.ufrj.br, gguardia@uepg.br}
\date{}

\begin{document}

\maketitle

\begin{abstract}
    In this paper, we present a method for constructing linear codes derived from book graphs. Applying  such a method, we construct binary codes that have the greatest possible minimum distance for their length and dimension by means of the $2$-sum of two equal book graphs. We then utilize matroid and graph theory together to investigate some properties of the resulting linear codes. Furthermore, we show that, from the incidence matrix of any graph, we can obtain a linear code over any finite field maintaining the same parameters. Furthermore, we show how the condition for two linear codes to generate a \textit{Calderbank-Shor-Steane} (CSS) code translates into matroid theory and, utilizing cycle graphs, we exhibit a way to generate codes that meet this condition.
\end{abstract}

\section{Introduction}

Classical error-correcting codes (ECCs)~\cite{Huffman_Pless_2003} are essential in today's digital communication, because they allow information to be sent reliably through noisy channels. As it is well-known, constructions of (classical) linear codes are much investigated in the literature, due to the structure of vector space they have. We denote by $[n, k, d]_q$ the parameters of a linear code over the field ${\mathbb F}_q$, where $q$ is a prime power, $n$ is the code length, $k$ is the code dimension and $d$ is the minimum Hamming distance of the code.  One of the main goals of research in this area is to find codes with higher dimension and minimum distance increasing the length as little as possible.

The study of linear codes is also important in the development of quantum error-correcting codes (QECCs), especially the class of 
\textit{Calderbank-Shor-Steane} (CSS) codes~\cite{calderbank_shor_css,steane_css}, since two compatible classical codes can be used to construct a CSS code. Similarly to classical codes, the parameters of a quantum CSS code is denoted by $[[n, k, d]]_q$, where $n$ is the length, $k$ is the number of logical qudits and $d$ is the minimum distance of the code.
Here, the aim is the same: to increase the code dimension and the minimum distance for a fixed $n$. Then, as the properties of a CSS code depend on the properties of the linear codes used to construct them, research on classical codes also impact the quantum counterpart.

Code theory shares a close relation with graph theory~\cite{code_graph_1,code_graph_2} and matroid theory~\cite{Whitney1987,greene}, hence several works on linear codes that take advantage of these mathematical structures have emerged. For example, we have graph-adjacency codes~\cite{hoffman_higman_graphs,TONCHEV2002549}, that have adjacency matrices as their generator matrix and low density parity-check codes based on matroids~\cite{ldpc_matroids}. 

In this paper, we investigate linear codes obtained as the null space of the incidence matrix of graphs and their relation with graphic matroids. Previous works have investigated linear codes from incidence matrices, but focused on specific classes of graphs, such as $k$-regular graphs~\cite{codes_from_incidence,codes_from_incidence_2} and unit graphs~\cite{codes_from_unit,codes_from_unit_2}. Using this approach, we present a method for constructing linear codes from the 2-sum of book graphs. Using such construction, we were able to find four binary codes that have the greatest minimum distance for their length and dimension. They are $[8,3,4]_2$, $[12,5,4]_2$, $[12,3,6]_2$, $[16,3,8]_2$ linear codes.

A $\mathbb{F}$-representable matroid can be related to a linear code over $\mathbb{F}$ through its parity-check matrix or its generator matrix. For quantum codes, on the other hand, the closest thing we currently know is that there is a correspondence between them and symplectic matroids~\cite{quantum_symplectic}. In order to establish a connection between quantum codes, specifically CSS codes, and ordinary matroids, we proved a sufficient condition for the codes corresponding to two matroids to be used to construct a CSS code. With this, we can generate linear codes through matroids and have a way to guarantee that codes generated can generate a CSS code as well. Furthermore, we showed how cycle graphs can be used to achieve this condition, yielding CSS codes with distance 2.

Finally, we show that a code constructed as the null space of the incidence matrix of orientation of a graph can be easily generalized to any finite field. This opens up a wide range of possibilities, since we can have any graph structure generate a linear code over a field with any characteristic we want. For example, the online database Code Tables~\cite{Grassl:codetables} created by Markus Grassl, which is one of the most widely recognized references for the best known linear codes, only registers codes over field $\mathbb{F}_q$ for $2\leq q \leq 9$, and our generalization allows us to go way beyond that.

The paper is organized as follows. Section~\ref{sec:background} brings a brief review of graphs, linear codes and matroid theory, that are the bases of this work. In Section~\ref{sec:results}, we present the new results of the paper. In particular, we construct new linear codes with maximum minimum distance (see Theorem~\ref{theorem:new_codes}). Finally, in Section~\ref{fr}, the final remarks are drawn.

\section{Preliminaries}
\label{sec:background}

\subsection{Graph Theory}

An \textit{(undirected) graph} $G$ is an ordered pair $(V, E)$, where $V$ is a set of objects called \textit{vertices} and $E$ is a collection of unordered pairs of vertices, called \textit{edges}. A vertex $v$ is incident with an edge $e$ if $v$ belongs to $e$. A \textit{directed graph} $D$ is an ordered pair $(V, A)$, where $V$ is a set of objects called \textit{vertices} and $E$ is a collection of ordered pairs of vertices, called \textit{arcs}. If $(x,y)$ is an arc, we call $x$ the \textit{tail} and $y$ the \textit{head} of the arc. An \textit{orientation} of a graph $G$ is a directed graph $D$ obtained by assigning a direction to all edges of $G$. Two graphs $G$ and $H$ are \textit{isomorphic} if there are bijections $\psi : V(G) \to V(H)$ and $\theta : E(G) \to E(H)$ such that $v$ is incident with $e$ if and only if $\psi(v)$ is incident with $\theta(e)$~\cite{oxley}.

One way to represent a graph is through an \textit{adjacency matrix}, which is a $|V|\times|V|$ matrix whose rows and columns represent vertices of the graph. Let $G=(V,E)$ be a graph. Then, its adjacency matrix $A$ is built as follows:
\begin{equation}
    a_{ij}=\begin{cases}
        1, & \text{if }(i,j) \in E\\
        0, & \text{otherwise}
    \end{cases}.
\end{equation}
Notice that the adjacency matrix of a directed graph is constructed almost the same way as the one of an undirected graph. The only difference is that $a_{i,j}$ is not necessarily equal to $a_{j,i}$, since $(i,j)\neq(j,i)$. 

An alternative way of representing a graph is through an \textit{incidence matrix}, which is a $|V|\times |E|$ matrix whose rows represent the vertices and columns represent the edges of the graph. For a directed graph $D=(V,A)$, the incidence matrix $A_{I}(D)$ is built as follows.
\begin{equation}
    {a_I}_{ij}=\begin{cases}
        1, & \text{if $i$ is the tail of arc $j$}\\
        -1, & \text{if $i$ is the head of arc $j$}\\
        0, & \text{otherwise}
    \end{cases}.
\end{equation}
In the case of an undirected graph, $a_{Iij}$ is 1 if vertex $i$ is incident with edge $j$ and 0 otherwise.

A \textit{subgraph} of a graph $G$ is a graph $H$ such that $V(H)\subseteq V(G)$ and \mbox{$E(H)\subseteq E(G)$}. If $H \neq G$, we say $H$ is a \textit{proper subgraph} of $G$. Let $G$ be a graph and $V'\subseteq V(G)$ a subset of the set of vertices of $G$. Then the subgraph $G'$ of $G$ \textit{induced} by $V'$ is the graph with set of vertices $V'$ and set of edges formed by the edges of $G$ with both ends in $V'$. We say that $G'$ is an \textit{induced subgraph} of $G$. 

Now, let $E' \subseteq E(G)$ be a subset of the set of edges of a graph $G$. Then the subgraph $G''$ \textit{induced} by $E'$ is a graph with set of edges $E'$ and set of vertices equal to set of ends of the edges in $E'$. We say $G''$ is an \textit{edge-induced subgraph} of $G$.

A \textit{walk} is a finite sequence $W=v_0,e_1,v_1,e_2,v_2,\dots,e_kv_k$ that alternates vertices and edges, starting and ending in a vertex, such that, for $1\leq i \leq k$, $e_i=(v_{i-1},v_i)$. A \textit{trail} is a walk where all edges are distinct. A \textit{path} is a trail where all vertices are distinct. A \textit{circuit} is a trail that starts and ends in the same vertex. A circuit in which all the other vertices are distinct is called a \textit{cycle}. The number of edges of the smallest cycle of a graph $G$ is its \textit{girth} and is denoted by $g(G)$. An \textit{acyclic graph} is a graph that contains no cycle.

A graph $G$ is \textit{connected} if, for all $v_i, v_j \in V(G)$, with $i\neq j$, there exists a path between $v_i$ and $v_j$ and is \textit{disconnected} otherwise. A \textit{connected component} of a graph $G$ is a maximal connected subgraph of $G$. Clearly the sets of vertices of the connected components of $G$ form a partition of $V(G)$~\cite{diestel2025graph}. The number of connected components of a graph $G$ is denoted by $\omega(G)$.

A \textit{book graph} $B_{n,c}$ is a graph such that there exist exactly $c$ cycles of length $n$ sharing only a single edge~\cite{book_graph}. Figure~\ref{fig:book-graph} shows a graphical representation of $B_{4,2}$. Notice that it has two four-edge cycles, $\{(1,2),(2,4),(4,3),(3,1)\}$ and $\{(1,2),(2,6),(6,5),(5,1)\}$, sharing the edge $(1,2)$. Proposition~\ref{prop:book-graph-props} in Section~\ref{sec:results} proves some properties of this class of graphs.

\begin{figure}
    \centering
    \includegraphics[width=0.75\linewidth]{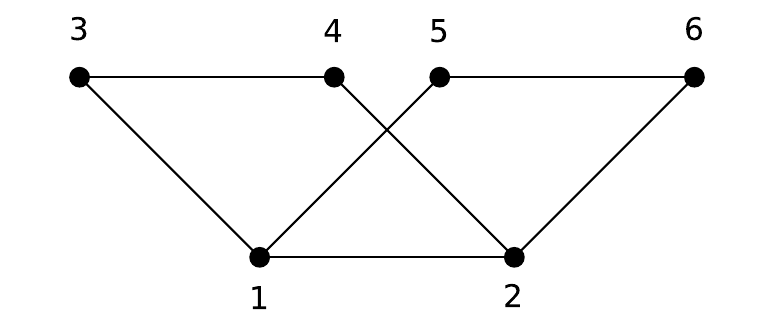}
    \caption{Graphical representation of $B_{4,2}$}
    \label{fig:book-graph}
\end{figure}

\subsection{Matroid Theory}

A matroid $M$ is an ordered pair $(E,\mathcal{I})$, where $E$ is a finite set, called \textit{ground set}, and $\mathcal{I}$ is a collection of subsets of $E$ with the following properties:
\begin{enumerate}[label=\textbf{(I\arabic*)}]
    \item $\emptyset \in \mathcal{I}$.
    \item If $I_1 \in \mathcal{I}$ and $I_2 \subseteq I_1$ and , then $I_2 \in \mathcal{I}$.
    \item If $I_1, I_2 \in \mathcal{I}$ and $|I_1| < |I_2|$, then there exists an element $x\in I_2-I_1$ such that $I_1 \cup \{x\} \in \mathcal{I}$.
\end{enumerate}

The subsets of $E$ which are in $\mathcal{I}$ are said to be \textit{independent sets} of $M$ and the subsets of $E$ that are not in $\mathcal{I}$ are called \textit{dependent sets} of $M$. Two matroids $M_1=(E_1,\mathcal{I}_1)$ and $M_2=(E_2,\mathcal{I}_2)$ are \textit{isomorphic} if there is a bijection $\psi: E_1 \to E_2$ such that, for all subsets $X\subseteq E_1$, the set $\psi(X)$ is a member of $\mathcal{I}_2$ if and only if $X$ is a member of $\mathcal{I}_1$. From the definition of a matroid, it is straightforward to construct a matroid from a matrix. The following proposition show how this can be done~\cite{oxley}.

\begin{proposition}
    \label{prop:vector-matroid}
    \textup{\textbf{(Oxley~\cite[Page 8]{oxley})}} Let $A$ be a $m\times n$ matrix over a field $\mathbb{F}$ and $E$ be the set of column labels of $A$. Let $\mathcal{I}$ be the collection of subsets $I$ of $E$ for which the multiset of columns labelled by $I$ is linearly independent in the vector space $V(m,\mathbb{F})$. Then $(E,\mathcal{I})$ is a matroid.
\end{proposition}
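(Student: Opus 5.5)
We verify the three axioms (I1)--(I3) for $\mathcal{I}$ directly from linear algebra in $V(m,\mathbb{F})$. Here a subset $I\subseteq E$ is independent when the columns labelled by $I$, counted with multiplicity, are linearly independent. Equivalently, the only solution of $\sum_{e\in I}\lambda_e a_e = 0$ is $\lambda_e=0$ for all $e\in I$, where $a_e$ is the column labelled $e$. Working with labels rather than column vectors takes care of repeated columns: two labels carrying the same vector give a dependent pair, as intended.

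For (I1), the empty family of columns is linearly independent vacuously, because the only linear combination with no terms is the empty sum. Hence $\emptyset\in\mathcal{I}$.

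For (I2), let $I_1\in\mathcal{I}$ and $I_2\subseteq I_1$. Suppose $\sum_{e\in I_2}\lambda_e a_e=0$. Setting $\lambda_e=0$ for $e\in I_1- I_2$ extends this to a vanishing combination over $I_1$. Independence of $I_1$ then forces every coefficient to be zero, so $I_2\in\mathcal{I}$.

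The main step is (I3), which I would handle with a dimension argument. Let $I_1,I_2\in\mathcal{I}$ with $|I_1|<|I_2|$, and let $W=\operatorname{span}\{a_e : e\in I_1\}$. Since $I_1$ is independent, $\dim W=|I_1|$. Suppose, for contradiction, that $a_x\in W$ for every $x\in I_2- I_1$. Columns with labels in $I_1\cap I_2$ already lie in $W$, so every column labelled by $I_2$ lies in $W$. Then $W$ contains $|I_2|$ linearly independent vectors, and these are distinct as vectors because they are independent. This gives $|I_2|\le \dim W=|I_1|$, a contradiction.

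So there is some $x\in I_2- I_1$ with $a_x\notin W$. It remains to check that $I_1\cup\{x\}$ is independent. Take a relation $\mu a_x+\sum_{e\in I_1}\lambda_e a_e=0$. If $\mu\neq0$, then $a_x\in W$, which is impossible. So $\mu=0$, and independence of $I_1$ forces all $\lambda_e=0$. Hence $I_1\cup\{x\}\in\mathcal{I}$.

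The only delicate point is keeping the multiset convention straight: $x\notin I_1$ as a label, so $a_x$ enters as a genuinely new column even if it equals some column already present. In that case $a_x\in W$, so that $x$ is simply not the one chosen. With (I1)--(I3) established, $(E,\mathcal{I})$ is a matroid.
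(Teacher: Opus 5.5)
Your proof is correct. It follows the standard argument: (I1) and (I2) are immediate, and (I3) is a dimension count. If every $a_x$ with $x\in I_2-I_1$ lay in the span $W$ of the columns labelled by $I_1$, then $|I_2|\le\dim W=|I_1|$, a contradiction.

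The paper gives no proof of its own here and only cites Oxley, whose proof is the same dimension-counting argument phrased via the span of the columns labelled by $I_1\cup I_2$.
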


The matroid described in Proposition~\ref{prop:vector-matroid} is the \textit{vector matroid} of the matrix $A$, denoted by $M[A]$. A $\mathbb{F}$\textit{-representable} matroid $M$ is a matroid isomorphic to the vector matroid of a matrix over the field $\mathbb{F}$ and we say such a matrix \textit{represents} $M$ \textit{over} $\mathbb{F}$. A minimal dependent set of a matroid $M$ is called a \textit{circuit} of $M$. The set of circuits of $M$ is denoted by $\mathcal{C}$ or $\mathcal{C}(M)$.

\begin{lemma}
\label{lemma:circuit_property}
    \textup{\textbf{(Oxley~\cite[Page 8]{oxley})}} The set of circuits $\mathcal{C}$ of a matroid has the following properties:
    \begin{enumerate}[label=\textbf{(C\arabic*)}]
        \item $\emptyset \notin \mathcal{C}$.
        \item If $C_1, C_2 \in \mathcal{C}$ and $C_1 \subseteq C_2$, then $C_1=C_2$.
        \item If $C_1$ and $C_2$ are distinct members of $\mathcal{C}$ and $e \subseteq C_1 \cap C_2$, then there exists a member $C_3$ of $\mathcal{C}$ such that $C_3 \in (C_1 \cup C_2) \setminus\{e\}$.
    \end{enumerate}
\end{lemma}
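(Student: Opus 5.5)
The plan is to derive all three properties straight from the definition of a circuit as a minimal dependent set, using only the independence axioms (I1)--(I3). In (C3) I read ``$e \subseteq C_1\cap C_2$'' as $e\in C_1\cap C_2$, and ``$C_3\in (C_1\cup C_2)\setminus\{e\}$'' as $C_3\subseteq (C_1\cup C_2)\setminus\{e\}$.

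\textbf{(C1) and (C2).} These are immediate.
For (C1), $\emptyset\in\mathcal{I}$ by (I1). So $\emptyset$ is not dependent, and hence it is not a circuit.
For (C2), suppose $C_1\subseteq C_2$ with both circuits. If $C_1\neq C_2$, then $C_1$ is a proper dependent subset of $C_2$, contradicting the minimality of $C_2$. Hence $C_1=C_2$.

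\textbf{(C3), the main step.} I argue by contradiction. Assume $(C_1\cup C_2)\setminus\{e\}$ contains no circuit. Every dependent set contains a minimal dependent subset, since the ground set is finite. Therefore $X:=(C_1\cup C_2)\setminus\{e\}$ is independent.

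Since $C_1\neq C_2$, (C2) shows $C_1\not\subseteq C_2$. Pick $f\in C_1\setminus C_2$. By minimality, $C_1\setminus\{f\}$ is independent.

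Now grow an independent set $I$ with
\[
C_1\setminus\{f\}\subseteq I\subseteq C_1\cup C_2
\]
that is maximal among independent subsets of $C_1\cup C_2$ containing $C_1\setminus\{f\}$. This $I$ cannot contain $f$, since otherwise $C_1\subseteq I$. It also cannot contain all of $C_2$. So there is some $g\in C_2\setminus I$, and $g\neq f$ because $f\notin C_2$.

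This gives $|I|\le |C_1\cup C_2|-2$. On the other hand, $|X| = |C_1\cup C_2|-1$, so $|I|<|X|$.

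Applying (I3) to $I$ and $X$ gives some $x\in X\setminus I$ with $I\cup\{x\}$ independent. Since $x\in C_1\cup C_2$, this contradicts the maximality of $I$. Hence $X$ is dependent and contains a circuit $C_3$.

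\textbf{Where the difficulty lies.} The only delicate point is the counting in (C3). The maximal independent set $I$ must miss two distinct elements of $C_1\cup C_2$. Choosing $f\in C_1\setminus C_2$, rather than an arbitrary element of $C_1$, is exactly what guarantees $f\neq g$. The maximality of $I$ is taken relative to $C_1\cup C_2$, so that the augmenting element $x$ supplied by (I3) does contradict it.
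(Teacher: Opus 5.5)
Your proof is correct, and you read (C3) as intended: $e\in C_1\cap C_2$ and $C_3\subseteq (C_1\cup C_2)\setminus\{e\}$. The paper gives no proof of its own and defers to Oxley, and your argument is essentially Oxley's standard proof, which picks $f\in C_2\setminus C_1$ and extends $C_2\setminus\{f\}$ — the same argument with $C_1$ and $C_2$ interchanged.
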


The cardinality of the smallest circuit of a matroid $M$ known as its \textit{girth} and denoted by $g(M)$. There are several equivalent ways of representing a matroid and its set of circuits is one of them. The following proposition complements Lemma~\ref{lemma:circuit_property} regarding the characterization of the set of circuits of a matroid.

\begin{proposition}
    \textup{\textbf{(Oxley~\cite[Page 9]{oxley})}} Let $E$ be a set and $\mathcal{C}$ a collection of subsets of $E$ satisfying the properties \textbf{(C1)}, \textbf{(C2)} and \textbf{(C3)} from Lemma~\ref{lemma:circuit_property}. Let $\mathcal{I}$ be the collection of subsets of $E$ that contain no member of $\mathcal{C}$. Then $(E,\mathcal{I})$ is a matroid with $\mathcal{C}$ as its collection of circuits.
\end{proposition}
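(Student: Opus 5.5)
We have to show that $\mathcal{I}$, the collection of subsets of $E$ containing no member of $\mathcal{C}$, satisfies \textbf{(I1)}--\textbf{(I3)}, and that the minimal dependent sets of the resulting matroid are exactly the members of $\mathcal{C}$. I will read \textbf{(C3)} in its intended form: if $C_1\neq C_2$ are in $\mathcal{C}$ and $e\in C_1\cap C_2$, then some $C_3\in\mathcal{C}$ satisfies $C_3\subseteq (C_1\cup C_2)\setminus\{e\}$. \textbf{(I1)} holds because, by \textbf{(C1)}, every circuit is nonempty, so $\emptyset$ contains no circuit. \textbf{(I2)} holds because a subset of a set containing no circuit contains no circuit.

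For the circuit claim, a set is dependent exactly when it contains some $C\in\mathcal{C}$. Each $C\in\mathcal{C}$ is therefore dependent. Every proper subset of $C$ is independent: if some $C'\subsetneq C$ were in $\mathcal{C}$, then \textbf{(C2)} would force $C'=C$. Hence each member of $\mathcal{C}$ is a minimal dependent set. Conversely, a minimal dependent set $D$ contains some $C\in\mathcal{C}$, and $C$ is itself dependent, so minimality gives $D=C$.

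The main work is \textbf{(I3)}. First I will prove an auxiliary fact: if $I\in\mathcal{I}$ and $x\notin I$, then $I\cup\{x\}$ contains at most one circuit. Suppose it contained two distinct circuits $C_1,C_2$. Both must contain $x$, since otherwise one of them would lie inside $I$. Applying \textbf{(C3)} with $e=x$ gives a circuit $C_3\subseteq (C_1\cup C_2)\setminus\{x\}\subseteq I$, which contradicts $I\in\mathcal{I}$.

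Now suppose $I_1,I_2\in\mathcal{I}$ with $|I_1|<|I_2|$, and assume for contradiction that $I_1\cup\{x\}\notin\mathcal{I}$ for every $x\in I_2\setminus I_1$. Among all counterexamples, I choose one with $|I_1\setminus I_2|$ minimal. If $I_1\setminus I_2=\emptyset$, then $I_1\subsetneq I_2$, and for any $x\in I_2\setminus I_1$ the set $I_1\cup\{x\}$ lies inside $I_2$, so it is independent by \textbf{(I2)}. So pick $e\in I_1\setminus I_2$.

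For each $f\in I_2\setminus I_1$, I aim to find an $e_f\in I_1\setminus I_2$ such that $T_f=(I_1\setminus\{e_f\})\cup\{f\}$ is independent. To do this, let $C_f$ be the unique circuit in $I_1\cup\{f\}$ (unique by the auxiliary fact). It contains $f$ and cannot lie inside $I_2$, so it meets $I_1\setminus I_2$; take $e_f$ in that intersection. Then $T_f$ contains no circuit: the only circuit of $I_1\cup\{f\}$ is $C_f$, and $C_f$ contains $e_f$, which has been removed from $T_f$.

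The pair $(T_f,I_2)$ has $|T_f|=|I_1|<|I_2|$ and $|T_f\setminus I_2|=|I_1\setminus I_2|-1$. By minimality, there is some $g\in I_2\setminus T_f$ with $T_f\cup\{g\}\in\mathcal{I}$. Here $g\neq f$ because $f\in T_f$, so $g\in I_2\setminus I_1$, and $I_1\cup\{g\}$ contains the unique circuit $C_g$. Since $T_f\cup\{g\}$ is independent, it cannot contain $C_g$, so $C_g$ must contain $e_f$.

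To finish, I want two circuits sharing an element, to which \textbf{(C3)} can be applied. I expect this bookkeeping step to be the main obstacle. My plan is to apply \textbf{(C3)} to $C_f\neq C_g$: they are distinct because $g\notin C_f$. They share $e_f$, so there is a circuit $C_3\subseteq (C_f\cup C_g)\setminus\{e_f\}\subseteq T_f\cup\{g\}$, which contradicts the independence of $T_f\cup\{g\}$.
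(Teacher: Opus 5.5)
Your proof is correct. The paper gives no proof of this proposition; it only quotes it from Oxley, so the natural comparison is with Oxley's argument. You were right to read \textbf{(C3)} as ``$e\in C_1\cap C_2$ implies some $C_3\subseteq (C_1\cup C_2)\setminus\{e\}$'', because the version printed in Lemma~\ref{lemma:circuit_property} contains typos. Your handling of \textbf{(I1)}, \textbf{(I2)} and of the identification of $\mathcal{C}$ with the minimal dependent sets is the standard one.

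For \textbf{(I3)} your route differs from Oxley's.
\begin{itemize}
\item You first derive from \textbf{(C3)} that $I\cup\{x\}$ contains at most one member of $\mathcal{C}$ whenever $I\in\mathcal{I}$. You then take a counterexample pair minimizing $|I_1\setminus I_2|$ and swap an element $e_f\in C_f\setminus I_2$ out of $I_1$. Minimality applied to $(T_f,I_2)$ gives $g$, and hence a second fundamental circuit $C_g$ through $e_f$.
\item Oxley instead keeps $I_1$ fixed and chooses an independent $I_3\subseteq I_1\cup I_2$ with $|I_3|>|I_1|$ and $|I_1\setminus I_3|$ minimal. For $e\in I_1\setminus I_3$ and $f\in I_3\setminus I_1$, minimality forces $(I_3\cup\{e\})\setminus\{f\}$ to contain some $C_f\in\mathcal{C}$. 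This $C_f$ must contain $e$ and some $g\in I_3\setminus I_1$. The circuit $C_g\subseteq (I_3\cup\{e\})\setminus\{g\}$ obtained the same way also contains $e$, and it differs from $C_f$ because $g\in C_f\setminus C_g$. So \textbf{(C3)} places a circuit inside $I_3$.
\end{itemize}
Oxley thus needs no uniqueness lemma and uses minimality to make the exchanged sets dependent. You front-load the fundamental-circuit lemma, which is useful in its own right, and use minimality to make the exchanged set extendable. Both arguments finish with a single application of \textbf{(C3)} and are equally elementary.

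Two small points are worth writing out. First, $g\notin I_1$ needs $e_f\notin I_2$, not just $g\neq f$. Second, $C_f\neq C_g$ uses $g\in C_g$, which holds since otherwise $C_g\subseteq I_1$. Also, the element $e$ you pick just before introducing $T_f$ is never used.
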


With this alternative way of representing a matroid, we can construct a matroid from a graph. The proposition below shows how we can do this.

\begin{proposition}
    \label{prop:cycle-matroid}
    \textup{\textbf{(Oxley~\cite[Page 10]{oxley})}} Let $G$ be a graph, $E$ the edge set of $G$ and $\mathcal{C}$ the collection of edge sets of the cycles of $G$. Then $\mathcal{C}$ is the set of circuits of a matroid on $E$.
\end{proposition}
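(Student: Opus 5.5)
By the preceding proposition on circuit axioms, it suffices to show that the collection $\mathcal{C}$ of edge sets of cycles of $G$ satisfies \textbf{(C1)}, \textbf{(C2)} and \textbf{(C3)} of Lemma~\ref{lemma:circuit_property}. I will read \textbf{(C3)} in its intended form: if $C_1\neq C_2$ and $e\in C_1\cap C_2$, then some $C_3\in\mathcal{C}$ satisfies $C_3\subseteq (C_1\cup C_2)\setminus\{e\}$. I will also use the standard fact that a finite graph in which every vertex has degree at least $2$ contains a cycle. To prove it, take a longest path $v_0,\dots,v_k$. The vertex $v_0$ has a second incident edge, and by maximality that edge must go back to some $v_i$ with $i\ge 2$. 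This closes a cycle; a loop or a parallel pair counts as a cycle of length $1$ or $2$.

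\textbf{(C1)} is immediate, since a cycle has at least one edge, so its edge set is nonempty.

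For \textbf{(C2)}, suppose $C_1\subseteq C_2$ with $C_1,C_2$ edge sets of cycles. In the subgraph induced by $C_2$ every vertex has degree exactly $2$, and the subgraph is connected. The subgraph induced by $C_1$ also has all degrees $2$ on its vertices. Suppose $C_1\neq C_2$. Since the $C_2$-cycle is connected, some vertex $v$ of the $C_1$-cycle is incident with an edge of $C_2\setminus C_1$. Then $v$ has degree at least $3$ in $C_2$, which is a contradiction. Hence $C_1=C_2$.

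For \textbf{(C3)}, let $e$ have ends $u$ and $w$. The main obstacle is extracting a cycle that avoids $e$; I will handle it with the degree-counting argument rather than by explicitly tracing paths. Let $H$ be the subgraph with edge set $(C_1\cup C_2)\setminus\{e\}$.

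The first step is to show $C_1\setminus C_2\neq\emptyset$. By \textbf{(C2)} and $C_1\neq C_2$, the set $C_1$ is not contained in $C_2$. Now consider any vertex of an edge in $C_1\setminus C_2$. It lies on the cycle $C_1$, which no longer contains $e$ in $H$. Removing $e$ from $C_1$ leaves a $u$--$w$ path $P_1$, and similarly removing $e$ from $C_2$ leaves a $u$--$w$ path $P_2$. Since $C_1\neq C_2$, the two paths are distinct. Two distinct $u$--$w$ paths have a union containing a cycle; this is the step needing care. To see it, let $f\in P_1\setminus P_2$, and delete from $H$ all edges that lie in neither path. Then iteratively delete vertices of degree $1$ together with their incident edge.

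If this process removed $f$, tracing back would show that $f$ lies on a pendant tree. But $f$ lies on a $u$--$w$ path, and $P_2$ connects $u$ and $w$ avoiding $f$, so the closed walk formed by $P_1$ and $P_2$ keeps every vertex on $f$'s side at degree at least $2$. More cleanly, $P_1\cup P_2$ contains the closed walk $P_1$ followed by $P_2$ reversed, which uses $f$ exactly once. A closed walk using some edge exactly once contains a cycle through that edge: shortcut at the repeated vertices, keeping the segment containing $f$.

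That cycle lies in $(C_1\cup C_2)\setminus\{e\}$ and gives $C_3$. Finally, the preceding proposition yields the matroid, with $\mathcal{C}$ as its set of circuits.
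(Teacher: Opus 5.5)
Your proposal is correct and takes the intended route: the paper gives no proof of its own, citing Oxley and placing the statement directly after the circuit-axiom characterization, and verifying \textbf{(C1)}--\textbf{(C3)}, with the key step that $(C_1\cup C_2)\setminus\{e\}$ contains a cycle, is exactly that standard argument. In the write-up, \textbf{(C3)} rests entirely on your closed-walk argument, which in fact gives a cycle through any chosen $f\in C_1\setminus C_2$; you can delete the pruning detour, the dangling sentence about vertices of edges in $C_1\setminus C_2$, and the minimum-degree-$2$ fact, which is never used.
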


The matroid described in Proposition~\ref{prop:cycle-matroid} is called \textit{cycle matroid} of the graph $G$ and is denoted by $M(G)$. A \textit{graphic matroid} is matroid that is isomorphic to the cycle matroid of a graph.

\begin{lemma}
\label{prop:rank-cyclic}
    \textup{\textbf{(Oxley~\cite[Page 24]{oxley})}} Let $G$ be a graph. Then the rank of the cycle matroid $M(G)$ is $r_{M(G)}=|V(G)|-\omega(G)$.
\end{lemma}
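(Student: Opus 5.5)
The plan is to compute the rank of $M(G)$ as the size of a basis, that is, a maximal independent set. By Proposition~\ref{prop:cycle-matroid}, the circuits of $M(G)$ are the edge sets of cycles of $G$. Hence a set $I\subseteq E(G)$ is independent exactly when it contains no cycle, i.e.\ when the edge-induced subgraph on $I$ is a forest. The rank $r_{M(G)}$ is therefore the common size of all maximal acyclic edge sets, and it suffices to exhibit one maximal acyclic edge set and count its edges. All maximal independent sets of a matroid have the same size by \textbf{(I3)}, so the choice of set does not matter.

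First I treat the connected case. Let $H$ be a connected component of $G$. I take a spanning tree $T$ of $H$; it exists because we may repeatedly delete an edge lying on a cycle, and deleting such an edge preserves connectivity. Next I prove the standard fact that a tree on $p$ vertices has $p-1$ edges, by induction on $p$. Every finite tree with at least two vertices has a leaf: take the endpoint of a longest path, since any other neighbour of that endpoint would either extend the path or close a cycle. Removing the leaf and its edge leaves a tree on $p-1$ vertices.

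Then I check that $T$ is maximal acyclic inside $H$. Any edge $e=(u,v)$ of $H$ not in $T$ closes a cycle with the unique $u$–$v$ path in $T$. So $|E(T)|=|V(H)|-1$ is the rank of the restriction of $M(G)$ to $E(H)$.

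Finally I assemble the components. Every cycle lies inside a single component, so a set $I$ is independent if and only if $I\cap E(H)$ is acyclic for each component $H$. Taking the union $F$ of spanning trees of all $\omega(G)$ components therefore gives an independent set. It is maximal, since any added edge lies in some component and closes a cycle there. Summing over the components, which partition $V(G)$, gives $|F|=\sum_H(|V(H)|-1)=|V(G)|-\omega(G)$.

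The main point needing care is the maximality argument together with the appeal to equal basis sizes. That is where \textbf{(I3)} is genuinely used, and it is essential to confirm that the independent sets of $M(G)$ are exactly the acyclic edge sets. Loops and parallel edges need a brief remark. A loop is a one-edge cycle, so it is never independent and is never chosen in a spanning tree. A pair of parallel edges forms a two-edge cycle, so at most one of them can appear in a forest. Neither case affects the count.
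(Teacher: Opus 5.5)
Your proof is correct: you count the edges of a spanning forest, one spanning tree per component, and invoke equal basis sizes via \textbf{(I3)}, and you handle loops and parallel edges properly. The paper gives no proof of its own and simply cites Oxley, whose argument is this same spanning-forest count.
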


\begin{proposition}
		\label{prop:rep-graphic}
        \textup{\textbf{(Oxley~\cite[Page 135]{oxley})}} Let $D(G)$ be an arbitrary orientation of a graph $G$ and $\mathbb{F}$ a field. Then an incidence matrix $A_{D}$ of $D(G)$ in which the columns corresponding to loops have only zeroes represents $M(G)$ over $\mathbb{F}$.
\end{proposition}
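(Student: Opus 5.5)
We need to show that, for any orientation $D(G)$ and any field $\mathbb{F}$, a set of columns of the incidence matrix $A_D$ is linearly independent over $\mathbb{F}$ exactly when the corresponding edge set contains no cycle of $G$. Loop columns are zero by convention. Since every circuit of $M(G)$ is the edge set of a cycle (Proposition~\ref{prop:cycle-matroid}), and the matroids are determined by their circuits, it suffices to prove two things: (i) the columns of the edge set of every cycle are linearly dependent; (ii) every linearly dependent set of columns contains the edge set of a cycle. Together these say the minimal dependent sets of $M[A_D]$ coincide with the circuits of $M(G)$, so the identity map on $E$ is the required isomorphism.

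For (i), take a cycle $v_0,e_1,v_1,\dots,e_k,v_k=v_0$. Assign the coefficient $\varepsilon_i=+1$ to the column of $e_i$ if $e_i$ is oriented from $v_{i-1}$ to $v_i$, and $\varepsilon_i=-1$ otherwise. Then $\varepsilon_i$ times the column of $e_i$ has $+1$ in row $v_{i-1}$ and $-1$ in row $v_i$. Summing over $i$, each vertex $v_i$ receives $-1$ from edge $e_i$ and $+1$ from edge $e_{i+1}$, with indices taken cyclically. All rows therefore cancel, so the combination is zero with nonzero coefficients $\pm1$. A loop is a cycle of length one, and its column is zero, hence dependent. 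A pair of parallel edges is a cycle of length two and is handled by the same formula. Since the coefficients are $\pm 1$, this works in every characteristic, including $2$, where $-1=1$.

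For (ii), I argue the contrapositive: the columns of any forest $F$ (an acyclic edge set) are linearly independent. Suppose $\sum_{e\in F}\lambda_e a_e=0$ with some $\lambda_e\neq 0$. Restrict to the subforest of edges with nonzero coefficient; it has no loops, being acyclic. Any nonempty forest has a leaf $v$, meaning a vertex incident with exactly one edge $f$ of the subforest. Row $v$ of the combination equals $\pm\lambda_f$, because every other edge with nonzero coefficient has a zero entry in that row. So $\lambda_f=0$, a contradiction. Hence any dependent set is not acyclic, i.e. it contains the edge set of a cycle, and a minimal dependent set equals such a cycle edge set by minimality together with (i).

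The main obstacle is the careful bookkeeping of signs in (i), especially for loops, parallel edges and arbitrary orientations. It is also worth making explicit that the leaf argument in (ii) never divides by anything, so it is field-independent. As an alternative check, rank considerations via Lemma~\ref{prop:rank-cyclic} could be used, but the direct circuit comparison above is cleaner.
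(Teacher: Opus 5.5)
Your proof is correct. The $\pm1$-weighted sum around a cycle makes every cycle edge set dependent over any field, and the leaf argument makes the columns of any forest independent, so the independent sets of $M[A_D]$ are exactly the acyclic edge sets and the identity on $E$ is the required isomorphism. The paper does not prove this statement itself; it simply cites Oxley, and your argument is essentially the standard proof of this textbook fact.
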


Let $G_1$ and $G_2$ be graphs and $p_1$ and $p_2$, respectively, edges of $G_1$ and $G_2$. For $i\in\{1,2\}$, assign a direction to $p_i$ and label its tail with $u_i$ and its head with $v_i$. A \textit{parallel connection} of $G_1$ and $G_2$ with respect to the directed edges $p_1$ and $p_2$ are built as follows. First, we delete $p_1$ from $G_1$ and $p_2$ from $G_2$. Next, we merge $u_1$ and $u_2$ into a single vertex $u$ and $v_1$ and $v_2$ into a single vertex $v$. Finally, we add an edge $p$ connecting $u$ and $v$.

Let $\mathcal{C}_P$ be the set of circuits of the cycle matroid of the parallel connection of $G_1$ and $G_2$. Denoting the cycle matroids of $G_1$ and $G_2$ by $M_1$ and $M_2$, respectively, and assuming that neither $p_1$ nor $p_2$ is a loop or a bridge, we have

\begin{align}
		\label{eq:paralell-connection-circuits} 
		\begin{split}
			\mathcal{C}_P=&\mathcal{C}(M_1\setminus p_1)\cup \{(C_1\setminus\{p_1\})\cup \{p\}:p_1\in C_1 \in \mathcal{C}(M_1)\}\cup\phantom{}\\
			&\mathcal{C}(M_2\setminus p_2)\cup\{(C_2\setminus\{p_2\})\cup \{p\}:p_2\in C_2 \in \mathcal{C}(M_2)\}\cup\phantom{}\\
			&\{(C_1\setminus\{p_1\})\cup(C_2\setminus\{p_2\}):p_i\in C_i \in \mathcal{C}(M_i)\text{ for all }i\}.
		\end{split}
\end{align}

Figure~\ref{fig:parallel-connection-g1g2} shows an example of a parallel connection of the graphs $G_1$ and $G_2$ presented, respectively, in Figures~\ref{fig:g1-parallel-connection} and~\ref{fig:g2-parallel-connection}. Now consider the graph from Figure~\ref{fig:2sum-g1g2}. Notice that it can be obtained from the parallel connection of $G_1$ and $G_2$ by just deleting the edges $p_1$ and $p_2$, merging the vertices $u_1$ and $u_2$ into a vertex $u$ and $v_1$ and $v_2$ into a vertex $v$. A graph obtained this way is a $2$\textit{-sum} of $G_1$ and $G_2$.

Equation~\ref{eq:paralell-connection-circuits} shows how to compute the cycle matroid of the parallel connection of two graphs $G_1$ and $G_2$ from $M(G_1)$ and $M(G_2)$, however, we will see that it can also be used to generalize the parallel connection to two arbitrary matroids. Suppose that $M_1$ and $M_2$ are two arbitrary matroids such that $E(M_1)\cap E(M_2)=\emptyset$. Let $p_1 \in E(M_1)$ be an element of $M_1$ and $p_2 \in E(M_2)$ an element of $M_2$ such that neither $p_1$ nor $p_2$ are a loop or a coloop. Let $p$ be an element that are neither in $E(M_1)$ nor in $E(M_2)$ and $E=E(M_1\setminus p_1) \cup E(M_2\setminus p_2) \cup p$. Then, for the collection $\mathcal{C}_P$ specified in Equation~\ref{eq:paralell-connection-circuits}, the proposition below follows.

\begin{proposition}
    \textup{\textbf{(Oxley~\cite[Page 252]{oxley})}} $\mathcal{C}_P$ is the set of circuits of a matroid on $E$.
\end{proposition}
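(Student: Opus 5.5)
We verify that $\mathcal{C}_P$ satisfies \textbf{(C1)}, \textbf{(C2)} and \textbf{(C3)} of Lemma~\ref{lemma:circuit_property}. The characterization proposition following that lemma then says the sets containing no member of $\mathcal{C}_P$ are the independent sets of a matroid on $E$, and that its circuits are exactly $\mathcal{C}_P$.

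To organize the cases, write $\mathcal{C}_P$ as the union of five families:
\begin{itemize}[label={}]
\item (a) $\mathcal{C}(M_1\setminus p_1)$;
\item (b) the sets $(C_1\setminus\{p_1\})\cup\{p\}$ with $p_1\in C_1\in\mathcal{C}(M_1)$;
\item (a$'$) and (b$'$), the same two families built from $M_2$;
\item (c) the ``mixed'' sets $(C_1\setminus\{p_1\})\cup(C_2\setminus\{p_2\})$.
\end{itemize}
A member of $\mathcal{C}_P$ determines its family: it lies in $E(M_1)\setminus p_1$, or in $E(M_2)\setminus p_2$, contains $p$, or meets both sides without $p$. Since $p_i$ is neither a loop nor a coloop, there is a circuit of $M_i$ containing $p_i$ with $|C_i|\ge 2$. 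Hence every set in (b), (b$'$) and (c) is nonempty, and so are the sets in (a), (a$'$). This gives \textbf{(C1)}.

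For \textbf{(C2)}, suppose $X\subseteq Y$ with $X,Y\in\mathcal{C}_P$. Intersecting with $E(M_1)\setminus p_1$ and $E(M_2)\setminus p_2$ reduces every case to incomparability of circuits in a single $M_i$. The useful translation is: a set $A\subseteq E(M_1)\setminus p_1$ has $A\cup p_1$ a circuit of $M_1$ exactly when $A\cup p$ is in (b). For example, if $X$ is in (a) and $Y=(C_1\setminus p_1)\cup(C_2\setminus p_2)$ is in (c), then $X\subseteq C_1\setminus p_1\subsetneq C_1$, contradicting (C2) in $M_1$. Two members of (c) that are comparable have comparable $M_1$-parts and $M_2$-parts, which forces equality.

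The main work is \textbf{(C3)}. I will use the strong form of circuit elimination in each $M_i$, including for sets with the element $p_i$ adjoined. Take distinct $X,Y\in\mathcal{C}_P$ and $e\in X\cap Y$. Replace each occurrence of $p$, and each missing side of a mixed circuit, by $p_1$ or $p_2$: for instance, $(C_1\setminus p_1)\cup(C_2\setminus p_2)$ is treated as the pair $(C_1,C_2)$. Then apply circuit elimination inside $M_1$ or $M_2$ on the side containing $e$, say $M_1$.

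Suppose first that the $M_1$-parts $C_1,C_1'$ are distinct. Elimination of $e$ gives a circuit $C_3\subseteq(C_1\cup C_1')\setminus e$.
\begin{itemize}[label={}]
\item If $p_1\notin C_3$, then $C_3$ is in (a) and we are done.
\item If $p_1\in C_3$, we use the $M_2$-part of $X$ or $Y$: either $C_2\setminus p_2$ when one of them is mixed, or $\{p\}$ when one contains $p$. Combining it with $C_3\setminus p_1$ produces a member of (b) or (c) inside $(X\cup Y)\setminus e$.
\end{itemize}
The delicate subcase is $C_3\ni p_1$ while neither $X$ nor $Y$ contains $p$ and neither is mixed. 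Then $X,Y$ lie in (a), so $p_1\notin C_1\cup C_1'$, which contradicts $C_3\subseteq C_1\cup C_1'$.

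If instead the $M_1$-parts coincide, then $X$ and $Y$ differ only on the $M_2$ side or in the choice of $p$ versus an $M_2$-part. In that case we take a circuit of $M_2$ through $p_2$, or use $p$ itself, to get the required member; for example, $(C_1\setminus p_1)\cup p\subseteq X\cup Y$ avoids $e$ only if $e\ne p$. The case $e=p$ is handled by elimination of $p_1$ in $M_1$ (or $p_2$ in $M_2$) between the two $M_i$-parts.

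I expect this bookkeeping over roughly fifteen family pairs to be the obstacle. I would organize it by the location of $e$ (in $E(M_1)$, in $E(M_2)$, or $e=p$) rather than by family pair, and use the symmetry between $M_1$ and $M_2$ to halve the work.
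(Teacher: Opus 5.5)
The paper does not prove this statement; it cites Oxley. Your proposal therefore has to stand on its own as a verification of \textbf{(C1)}--\textbf{(C3)}. The strategy is the right one.

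What you have that works:
\begin{itemize}
\item Your handling of \textbf{(C1)} and \textbf{(C2)} is correct. One wording fix for \textbf{(C1)}: what you need is that \emph{every} circuit through $p_i$ has at least two elements, which is the non-loop hypothesis. The existence of one such circuit is irrelevant.
\item The main case of \textbf{(C3)} is correct: distinct $M_1$-parts, eliminate $e$ in $M_1$, split on whether $p_1\in C_3$, then borrow $p$ or the $M_2$-part of a mixed set. Ordinary weak elimination suffices there, so the strong form is never needed.
\end{itemize}

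The remaining cases of \textbf{(C3)} are not handled correctly as written.

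\emph{Equal $M_1$-parts.} Let $e\in E(M_1)\setminus\{p_1\}$, and suppose $X\neq Y$ have the same $M_1$-part $C$, so necessarily $p_1\in C$. Your sample witness $(C\setminus\{p_1\})\cup\{p\}$ contains $e$, so it is useless here. The witness must come entirely from the $M_2$ side:
\begin{itemize}
\item If $X=(C\setminus\{p_1\})\cup\{p\}$ and $Y=(C\setminus\{p_1\})\cup(D\setminus\{p_2\})$, take $(D\setminus\{p_2\})\cup\{p\}$ from (b$'$).
\item If both are mixed with distinct $M_2$-parts $D,D'$, eliminate $p_2$ in $M_2$. This gives a circuit of $M_2\setminus p_2$ inside $(D\cup D')\setminus\{p_2\}$, which avoids $e$.
\end{itemize}

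\emph{The case $e=p$.} Here you only invoke elimination of $p_1$ (or $p_2$) between two parts on the same side. That is unavailable when $X=(C_1\setminus\{p_1\})\cup\{p\}$ lies in (b) and $Y=(C_2\setminus\{p_2\})\cup\{p\}$ lies in (b$'$), since there are no two circuits of one $M_i$ to eliminate from. The required member is the mixed set $(C_1\setminus\{p_1\})\cup(C_2\setminus\{p_2\})$ from family (c). This is exactly the situation that makes family (c) necessary, so it cannot be left implicit.

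With these two cases added, your case split by the location of $e$ is complete and gives a full proof. The split is: $e\in E(M_1)\setminus\{p_1\}$ with distinct or equal $M_1$-parts, the mirror image on the $M_2$ side, and $e=p$.
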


	\begin{figure}[htbp]
		\centering
		
		\begin{subfigure}{0.4\textwidth}
			\centering
			\includegraphics[width=\linewidth]{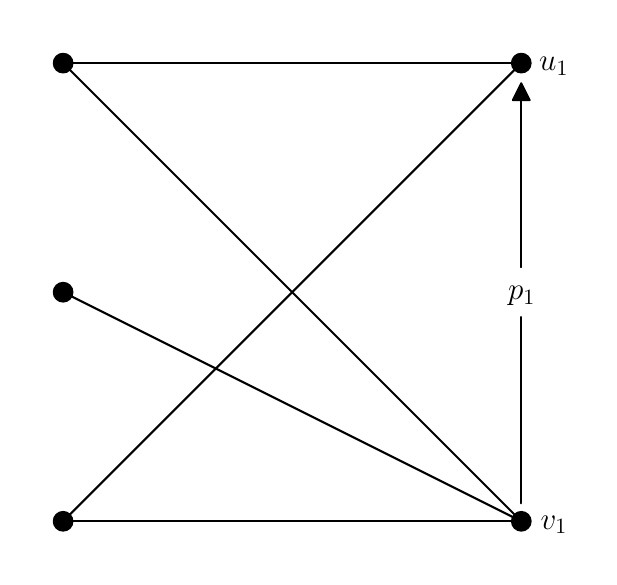}
			\caption{$G_1$}
			\label{fig:g1-parallel-connection}
		\end{subfigure}
		\hfill
		\begin{subfigure}{0.4\textwidth}
			\centering
			\includegraphics[width=\linewidth]{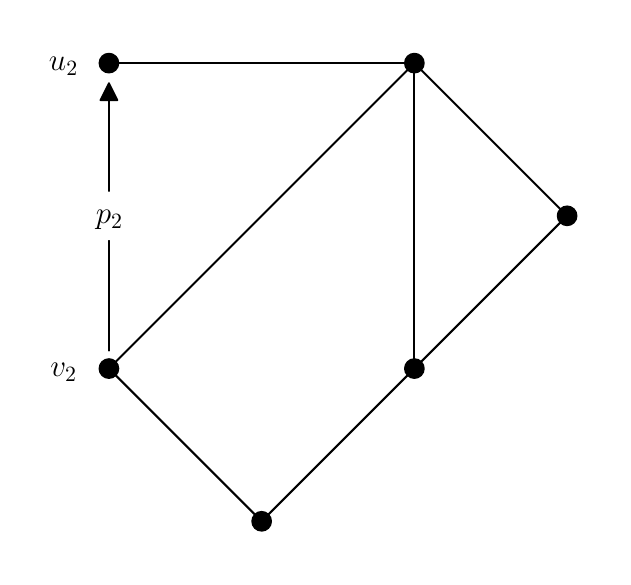}
			\caption{$G_2$}
			\label{fig:g2-parallel-connection}
		\end{subfigure}
		
		\vspace{0.5cm}
		
		\begin{subfigure}{0.48\textwidth}
			\centering
			\includegraphics[width=\linewidth]{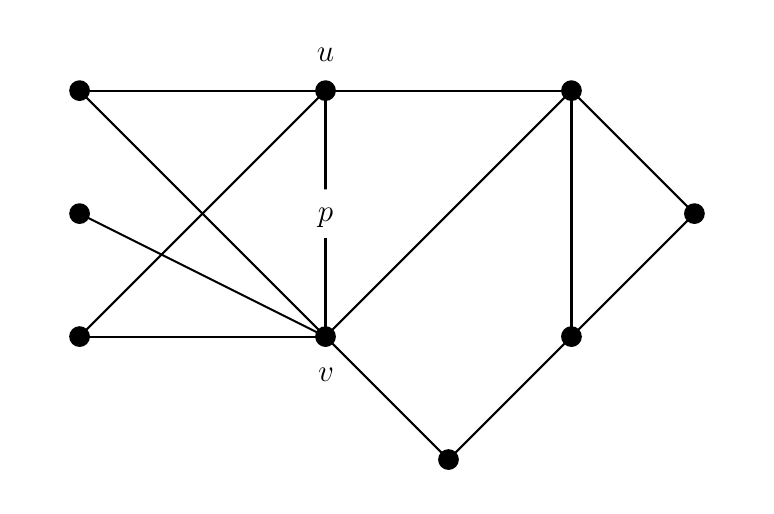}
			\caption{Parallel connection of $G_1$ and $G_2$}
			\label{fig:parallel-connection-g1g2}
		\end{subfigure}
		\hfill
		\begin{subfigure}{0.48\textwidth}
			\centering
			\includegraphics[width=\linewidth]{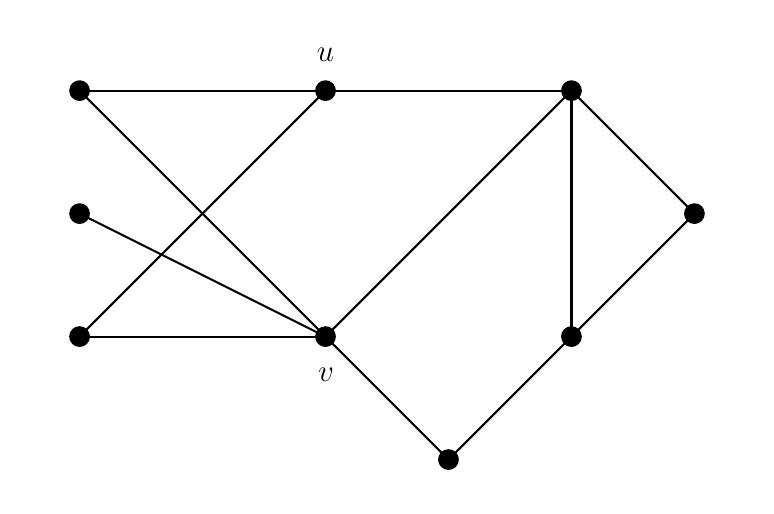}
			\caption{2-sum of $G_1$ and $G_2$}
			\label{fig:2sum-g1g2}
		\end{subfigure}
		
		\caption{2-sum and parallel connection of two graphs}
		\label{fig:2sum-parallel}
	\end{figure}

Let $M$ be a matroid.  Let $\xi$ be a relation on the ground set $E(M)$ such that $e\ \xi\ f$ if $e=f$ or $M$ has a circuits containing $\{e,f\}$. The equivalence classes of $\xi$ are the \textit{(connected) components} of $M$. A \textit{separator} of $M$ is a set $X \subseteq E(M)$ that is the union of components of $M$.

Let $M_1$ and $M_2$ be matroids with at least two elements such that \linebreak\mbox{$E(M_1)\cap E(M_2)=\{p\}$} and that do not have $p$ as a separator. Then the \mbox{$2$-sum} $M_1 \oplus_2M_2$ of $M_1$ and $M_2$ is the matroid with ground set $E=(E(M_1)\cup E(M_2))\setminus\{p\}$ and set of circuits
\begin{equation}
		\mathcal{C}=\mathcal{C}(M_1\setminus p) \cup \mathcal{C}(M_2\setminus p) \cup \{C_1 \cup C_2:p\in C_1 \in \mathcal{C}(M_1),p\in C_2 \in \mathcal{C}(M_2)\}.
\end{equation}

\begin{proposition}
\label{prop:2srank}
    \textup{\textbf{(Jaeger, Vertigan, and Welsh~\cite[Page 38]{Jaeger_Vertigan_Welsh_1990})}} If $M=M_1\oplus_2 M_2$, then the rank of $M$ is $r_M=r_{M_1}+r_{M_2}-1$
\end{proposition}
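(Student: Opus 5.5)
We have to prove Proposition~\ref{prop:2srank}: if $M=M_1\oplus_2 M_2$, then $r_M=r_{M_1}+r_{M_2}-1$. The plan is to bound $r_M$ from both sides by exhibiting, and then controlling, explicit independent sets of $M$.

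\textbf{Choice of bases.} Since $p$ is not a separator of either $M_i$, it is neither a loop nor a coloop of $M_i$. Hence we may pick a basis $B_1$ of $M_1$ containing $p$ and a basis $B_2$ of $M_2$ not containing $p$. The first exists because $p$ is not a loop: extend $\{p\}$ to a basis. The second exists because $p$ is not a coloop. We then claim that
\[
B=(B_1\setminus\{p\})\cup B_2
\]
is a basis of $M$. Its size is $(r_{M_1}-1)+r_{M_2}$, which is exactly the desired rank.

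\textbf{Independence of $B$.} We show that $B$ contains no circuit from the list defining $\mathcal{C}(M_1\oplus_2 M_2)$.
\begin{itemize}
\item A circuit of $M_1\setminus p$ inside $B$ would lie in $B_1\setminus\{p\}$, which is independent in $M_1$. So no such circuit exists.
\item A circuit of $M_2\setminus p$ inside $B$ would lie in $B_2$, which is independent. So no such circuit exists.
\item Suppose $(C_1\setminus\{p\})\cup(C_2\setminus\{p\})\subseteq B$ with $p\in C_i\in\mathcal{C}(M_i)$. Then $C_1\subseteq B_1$, since $p\in B_1$, which contradicts the independence of $B_1$.
\end{itemize}
(The paper writes the third family of circuits as $C_1\cup C_2$; it must be read with $p$ removed, since $p\notin E$.) This gives $r_M\ge r_{M_1}+r_{M_2}-1$.

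\textbf{Maximality.} We must show that for every $e\in E\setminus B$, the set $B\cup\{e\}$ contains a circuit of $M$. There are two cases.
\begin{itemize}
\item If $e\in E(M_1)\setminus p$, consider the fundamental circuit $C$ of $e$ with respect to $B_1$ in $M_1$. If $p\notin C$, then $C$ is a circuit of $M_1\setminus p$ contained in $B\cup\{e\}$. If $p\in C$, take the fundamental circuit $C_2$ of $p$ with respect to $B_2$ in $M_2$, which contains $p$. Then $(C\setminus p)\cup(C_2\setminus p)\subseteq B\cup\{e\}$ is a circuit of $M$.
\item If $e\in E(M_2)\setminus p$, the fundamental circuit of $e$ with respect to $B_2$ avoids $p$, because $p\notin B_2\cup\{e\}$. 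It is therefore a circuit of $M_2\setminus p$.
\end{itemize}

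The main obstacle is this maximality step, specifically the case where the fundamental circuit passes through $p$. That case requires using the third family of circuits, and it relies on $p$ not being a coloop of $M_2$ so that $C_2$ exists.

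The maximality step yields only that $B$ is a maximal independent set. Since all maximal independent sets of a matroid have equal size by (I3), every basis of $M$ has size $|B|$. Therefore $r_M=r_{M_1}+r_{M_2}-1$.
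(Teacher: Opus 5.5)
Your proof is correct and complete. The paper gives no argument for this proposition; it simply quotes it from Jaeger, Vertigan and Welsh, so there is no in-paper proof to compare step by step.

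\textbf{Your route.} You work directly from the circuit description of $M_1\oplus_2 M_2$:
\begin{itemize}
\item The non-separator hypothesis is exactly what lets you choose bases with $p\in B_1$ and $p\notin B_2$.
\item You check independence of $(B_1\setminus\{p\})\cup B_2$ against each of the three circuit families.
\item You get maximality from fundamental circuits. When the fundamental circuit of $e$ passes through $p$, you splice it with the circuit of $M_2$ contained in $B_2\cup\{p\}$.
\end{itemize}
You are also right that the third family must be read as $(C_1\cup C_2)\setminus\{p\}$. The paper's display omits the deletion of $p$.

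\textbf{The usual route in the literature.} One passes through the parallel connection $P(M_1,M_2)$. From its rank function one shows $r(P(M_1,M_2))=r_{M_1}+r_{M_2}-1$. Then one uses $M_1\oplus_2 M_2=P(M_1,M_2)\setminus p$. Since $p$ lies in a circuit of $M_1$, it is not a coloop of $P(M_1,M_2)$, so deleting it does not lower the rank.

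\textbf{Comparison.} The parallel-connection route yields the full rank function of the $2$-sum, but it needs that extra machinery. Your argument needs only the circuit description and (I3). It uses nothing beyond ``$p$ is neither a loop nor a coloop''; the at-least-two-elements condition plays no role. It also exhibits an explicit basis. The total rank is all the paper later uses, when it computes $r_N=2c(n-2)+1$ in Theorem~\ref{theorem:new_codes}.
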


\subsection{Linear Codes}

A \textit{linear code} $C$ over $\mathbb{F}_q$ is the row space of a matrix $G \in \mathcal{M}_{k\times n}(\mathbb{F}_q)$, where $\mathcal{M}_{k\times n}(\mathbb{F}_q)$ is the space of all $k \times n$ matrices over $\mathbb{F}_q$. We call $G$ the \textit{generator matrix} of $C$. Equivalently, $C$ is the null space of a matrix $H \in \mathcal{M}_{n-k\times n}(\mathbb{F}_q)$, called \textit{parity-check matrix} of $C$. The elements of $C$ are its \textit{codewords}.

The \textit{(Hamming) distance} $\text{d}(x,y)$ between two vectors $x,y \in \mathbb{F}_q^n$ is the number of coordinates where $x$ and $y$ differ. The \textit{(Hamming) weight} $\text{w}(x)$ of a vector $x \in \mathbb{F}_q^n$ is the number of nonzero coordinates of $x$. The \textit{(minimum) distance} of a linear code $C$ is the smallest distance between two codewords of $C$ or, equivalently, the minimum weight of nonzero codewords of $C$. If a linear code $C$ over $\mathbb{F}_q$ has length $n$, dimension $k$ and minimum distance $d$, we denote it by $[n, k, d]_q$ \cite{Huffman_Pless_2003}.

Let $C$ be a $[n, k, d]$ linear code over ${\mathbb F}_q$ with generator matrix $G$ and parity-check matrix $H$. Then the \textit{dual code} $C^\perp$ of $C$ is the $[n,n-k]$ linear code with generator matrix $H$ and parity-check matrix $G$. Equivalently, the dual code $C^\perp$ consists of every vector in ${ \mathbb F}_q^n$ that are orthogonal to all codewords of $C$, that is,
\begin{equation}
		C^\perp=\{x \in \mathbb{F}_{q}^{n}:\langle x,y\rangle=0\text{ for all }y \in C\},
\end{equation}
where $\langle x,y\rangle$ denotes the \textit{Euclidean inner product} of $x$ and $y$, defined by
\begin{equation}
    \langle x,y\rangle = \sum_{i=1}^{n}x_iy_i.
\end{equation}
A linear code $C$ such that $C \subseteq C^\perp$ is called \textit{self-orthogonal}. If $C=C^\perp$, $C$ is called \textit{self-dual}.

\subsection{Stabilizer Codes}

In order to define stabilizer codes, we first need to define the $n$-qubit Pauli group. To do this, consider the Pauli gates $X$, $Y$ and $Z$, defined respectively by the Pauli matrices $\sigma_x$, $\sigma_y$ and $\sigma_z$:
\begin{equation}
		X=\sigma_x=\begin{bmatrix}
			0&1\\
			1&0
		\end{bmatrix}\text{, }Y=\sigma_y=\begin{bmatrix}
			0&-i\\
			i&0
		\end{bmatrix}\text{ and }Z=\sigma_z=\begin{bmatrix}
			1&0\\
			0&-1
		\end{bmatrix}.
	\end{equation}
The $n$\textit{-qubit Pauli group} $G_n$ is given by
\begin{equation}
		G_n=\left\{c\bigotimes_{i=1}^nA_i:c\in\{1,-1,i,-i\}\text{ and }A_i \in \{I,X,Y,Z\}\text{ for }1\leq i\leq n\right\}.
\end{equation}

Let $S$ be a subgroup of $G_n$ and let $V_S$ be the vector space consisting of all $n$-qubit states that are fixed by all elements of $S$. Then $V_S$ is said to be the \emph{vector space stabilized by} $S$, and $S$ is called the \emph{stabilizer} of $V_S$.

We say that the elements $g_1 , \ldots , g_r \in G_n$ are \emph{independent} if 
$$\langle g_1 , \ldots, g_{i-1} , g_{i+1}, \ldots, g_r\rangle \neq \langle g_1 , \ldots , g_r \rangle.$$ In other words, by deleting any generator $g_i$, the new subgroup generated becomes smaller.

The following proposition will pave the way for the definition of a stabilizer code.

\begin{proposition}\label{stabilizercode}
    \textup{\textbf{(Nielsen and Chuang~\cite[Page 458]{Nielsen_Chuang_2010})}} Let $S\subseteq G_n$ be a subgroup generated by $n-k$ independent and commuting elements from $G_n$ such that $-I \notin S$. Then $V_S$ is a $2^{k}$-dimensional vector space.
\end{proposition}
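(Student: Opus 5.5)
The plan is to show that the common $+1$ eigenspace $V_S$ is the image of an explicit projector and then compute the trace of that projector. First I will reduce to the case where $S$ contains no elements with phase $\pm i$. Since $-I\notin S$ and every element of $G_n$ squares to $\pm I$, each $g\in S$ satisfies $g^2=I$. Indeed, $g^2=-I$ would put $-I$ in $S$. So every $g\in S$ is Hermitian with eigenvalues $\pm 1$. Because the generators commute, $S$ is abelian. Every element of $S$ can then be written uniquely as $g_1^{a_1}\cdots g_{n-k}^{a_{n-k}}$ with $a_j\in\{0,1\}$. Uniqueness follows from independence: if two distinct exponent vectors gave the same element, some generator would be a product of the others and could be deleted without shrinking the group. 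Hence $|S|=2^{n-k}$.

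Next I define $P=\frac{1}{|S|}\sum_{g\in S} g$. Since $S$ is a group, $hP=P$ for every $h\in S$. This gives $P^2=P$, and $P$ is Hermitian because $S$ is closed under inverses and $g^{-1}=g^\dagger$. So $P$ is an orthogonal projector. Its image is exactly $V_S$. If $hv=v$ for all $h\in S$, then $Pv=v$. Conversely, if $v=Pw$, then $hv=hPw=Pw=v$. Therefore $\dim V_S=\operatorname{tr}P=\frac{1}{2^{n-k}}\sum_{g\in S}\operatorname{tr}g$.

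The key step is the trace computation. Every element of $G_n$ other than $\pm I,\pm iI$ is $c$ times a tensor product containing at least one factor from $\{X,Y,Z\}$. Since $\operatorname{tr}$ is multiplicative over tensor products and $X,Y,Z$ are traceless, such elements have trace $0$. The only elements of $S$ proportional to $I$ are $I$ itself. We have already excluded $-I$, and $\pm iI$ would square to $-I$. So the sum reduces to $\operatorname{tr}I=2^n$, giving $\dim V_S=2^n/2^{n-k}=2^k$.

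The main obstacle I expect is the careful justification that $|S|=2^{n-k}$, that is, that the $2^{n-k}$ products of generators are distinct. I will argue this directly from the definition of independence given above. Suppose $\prod_j g_j^{a_j}=\prod_j g_j^{b_j}$ with $a\neq b$. Using commutativity and $g_j^2=I$, this gives $\prod_j g_j^{a_j+b_j \bmod 2}=I$ with some exponent equal to $1$. Solving for a generator $g_i$ with nonzero exponent expresses it as a product of the remaining generators, which contradicts independence.
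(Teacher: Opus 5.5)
Your proof is correct. It takes a genuinely different route from the one the paper relies on: the paper gives no argument of its own for this statement and only cites Nielsen and Chuang.

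\textbf{The Nielsen--Chuang argument.} Their proof considers all $2^{n-k}$ syndrome projectors $P_S^x=2^{-(n-k)}\prod_{j}(I+(-1)^{x_j}g_j)$, for $x\in\mathbb{Z}_2^{n-k}$. These are mutually orthogonal and sum to $I$. All of them have the same rank, because $P_S^{x}$ is obtained by conjugating $P_S^{0}$ with a Pauli operator that anticommutes with exactly those $g_j$ for which $x_j=1$. Producing such operators requires two ingredients. The first is the check-matrix characterization of independence: the generators are independent iff the rows of the binary check matrix are linearly independent, and this characterization itself uses $-I\notin S$. The second is the nondegeneracy of the symplectic product.

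\textbf{Your argument.} You work only with $P_S^0$. By your unique-representation step, $P_S^0$ is exactly your group average $\frac{1}{|S|}\sum_{g\in S}g$. You compute its rank as its trace, using two facts: every Pauli operator not proportional to $I$ is traceless, and $-I\notin S$ excludes $-I$ and $\pm iI$.

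\textbf{Comparison.} Your route gives a short, self-contained proof. It works directly from the group-theoretic definition of independence stated in the paper and needs no detour through check matrices or anticommuting Paulis. The Nielsen--Chuang route, in exchange, makes explicit the decomposition of the whole $2^n$-dimensional space into equal-dimensional orthogonal syndrome spaces, which underlies syndrome measurement. Your method recovers this as well. Expanding each $P_S^x$ gives $\mathrm{tr}\,P_S^x=2^{-(n-k)}\sum_{a}(-1)^{a\cdot x}\,\mathrm{tr}\bigl(\prod_j g_j^{a_j}\bigr)=2^{-(n-k)}\cdot 2^n$. The signs only affect terms of zero trace, so no conjugating operators are needed.
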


An $[[n, k, d]]$ \textit{stabilizer code} $C(S)$ is the vector space stabilized by a subgroup $S \subseteq G_n$ such that $-I\notin S$ and $S$ has $n-k$ independent and commuting generators. The \textit{normalizer} of a stabilizer $S$, denoted by $N(S)$, is the set of all elements $E$ of $G_n$ such that $EgE^\dagger \in S$ for all $g \in S$. The \textit{weight} of an element $E \in G_n$ is the number of terms different from identity in the tensor product that forms it. The \textit{minimum distance} $d$ of $C(S)$ is the minimum weight of an element of $N(S)-S$. In this paper, we take into consideration only binary stabilizer codes, but nonbinary stabilizer codes were already characterized in~\cite{non_binary_finite} in a generalization of additive codes over $\mathbb{F}_4$~\cite{quantum_gf4}.

Next, we will define the CSS codes, a particular class of stabilizer codes.

Let $C_1$ and $C_2$ be two linear binary codes $[n,k_1]$ and $[n,k_2]$, respectively, such that $C_2 \subset C_1$ and  $C_1$ and $C_2^\perp$ both correct $t$ errors. Then the \textit{Calderbank-Shor-Steane (CSS)} code~\cite{calderbank_shor_css,steane_css} of $C_1$ over $C_2$, denoted by $\text{CSS}(C_1,C_2)$, is the stabilizer code with check matrix
\begin{equation}
		\begin{bmatrix}
			\begin{array}{c|c}
				H(C_2^\perp) & 0 \\
				0 & H(C_1)
			\end{array}
		\end{bmatrix}.
	\end{equation}

\section{The Results}
\label{sec:results}

In this section, we present the results of the paper. The main result is the construction of new linear codes with maximum minimum distance (see Theorem~\ref{theorem:new_codes}). 
In Theorem~\ref{theorem:graph_fields}, we generalize the construction of linear codes based on the incidence matrix of a given graph to all finite fields. Another contribution is to construct CSS quantum codes derived from cycle graphs by applying Theorem~\ref{theo:matroid-code-subset}.

\subsection{CSS Codes from Cycle Graphs}

\begin{lemma}
    \label{lemma:spanning-set}
    Let $V$ be a vector space over ${\mathbb F}_2$ and $S \subseteq V$ the set of all non-zero vectors in $V$ with minimal support. Then, $S$ is a spanning set of $V$.
\end{lemma}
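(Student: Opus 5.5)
Throughout, I read $V$ as a subspace of $\mathbb{F}_2^n$, that is, a binary linear code, so that supports make sense. A nonzero vector $v\in V$ has \emph{minimal support} if no nonzero $u\in V$ satisfies $\supp(u)\subsetneq\supp(v)$. The plan is to show that every nonzero $v\in V$ is a sum of elements of $S$. The argument is strong induction on the weight $\text{w}(v)$. Since $S\subseteq V$, this gives $\langle S\rangle = V$; the zero vector is the empty sum.

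For the base case, take a nonzero $v\in V$ of minimum weight among nonzero vectors of $V$. It lies in $S$, because any nonzero $u$ with $\supp(u)\subsetneq\supp(v)$ would have strictly smaller weight. For the inductive step, let $v\in V$ be nonzero and assume every nonzero vector of $V$ of smaller weight lies in $\langle S\rangle$. If $v\in S$ we are done. Otherwise there is a nonzero $u\in V$ with $\supp(u)\subsetneq\supp(v)$.

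The key observation uses that the field is $\mathbb{F}_2$: every nonzero coordinate equals $1$. Hence $v+u$ has support exactly $\supp(v)\setminus\supp(u)$. This set is nonempty, since the inclusion is proper, and it is strictly smaller than $\supp(v)$, since $u\neq 0$. So $v+u$ is a nonzero vector of $V$ with $\text{w}(v+u)<\text{w}(v)$. Also $u$ is nonzero with $\text{w}(u)<\text{w}(v)$. By the induction hypothesis both $u$ and $v+u$ lie in $\langle S\rangle$, and therefore so does $v=(v+u)+u$.

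The only delicate point is the support identity $\supp(v+u)=\supp(v)\setminus\supp(u)$, which genuinely needs the binary field. Over a general field one would instead subtract a suitable scalar multiple $\lambda u$ to cancel one coordinate, which still strictly shrinks the support. I would remark on this, but carry out the proof only over $\mathbb{F}_2$ as stated. I would also note that the induction is well founded, since weights are nonnegative integers bounded by $n$.
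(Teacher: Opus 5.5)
Your proof is correct and follows essentially the paper's argument. Both remove from $v$ a nonzero vector $u$ with $\supp(u)\subsetneq\supp(v)$, use the binary identity $\supp(v+u)=\supp(v)\setminus\supp(u)$ to obtain a nonzero remainder of strictly smaller support, and conclude by finiteness. The difference is only bookkeeping: the paper takes the removed vector already in $S$ and iterates on the remainder, while your strong induction on weight handles both pieces. This makes the paper's informal ``the process is finite'' rigorous and avoids having to first exhibit an element of $S$ below $v$.
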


\begin{proof}
    Let $v\in V$ a vector and $\supp(v)$ the support of $v$. We will show that $v$ can be obtained through the sum of vectors in $S$. If $v\in S$, there is nothing to prove. If $v \notin S$, there exists $w_1 \in S$ such that $\supp(w_1) \subsetneq \supp(v)$. Let $w_2 \in V$ the vector whose support $\supp(w_2)$ consists of all non-zero coordinates of $v-w_1$, i.e., $v= w_1 + w_2$. If $w_2 \in S$, the proof is concluded; otherwise, there exists $w_3 \in S$ such that $\supp(w_3) \subsetneq\supp(w_2)$. Let $w_4 \in V$ be the vector whose support consists of the non-zero coordinates of $w_2 - w_3$. Then, $v= w_1 + w_3 + w_4$. We know the process is finite, hence there is a finite number of vectors $w_i \in S$, $i=1, 2, \ldots ,t$ such that $v=w_1 + w_2 + \ldots + w_t$. This concludes the proof.
\end{proof}

\begin{lemma}
    \label{lemma:support_circuit}
    Let $A$ be an $m\times n$ matrix, $M$ be the vector matroid of $A$ and $S$ the set of all non-zero vectors in $\ker(A)$ with minimal support. Let $v=[v_0,\dots,v_{n-1}]$ be a vector. Then $v \in S$ if and only if $\supp(v) \in \mathcal{C}(M)$.
\end{lemma}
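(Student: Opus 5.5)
The plan is to prove both directions directly from the definition of the vector matroid $M=M[A]$ in Proposition~\ref{prop:vector-matroid}. The basic link is this: a vector $v$ with support $X$ lies in $\ker(A)$ exactly when $\sum_{i\in X} v_i a_i = 0$, where $a_i$ are the columns of $A$. So a nonzero kernel vector with support $X$ is the same as a linear dependence among the columns labelled by $X$ in which every coefficient is nonzero. Throughout, "minimal support" means minimal under inclusion among supports of nonzero kernel vectors. Membership in $S$ is understood for $v\in\ker(A)$, as in Lemma~\ref{lemma:spanning-set}.

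\textbf{($\Rightarrow$)} Let $v\in S$ and set $X=\supp(v)$.
\begin{itemize}
\item $X$ is dependent in $M$, since $v$ gives a nontrivial dependence among the columns indexed by $X$.
\item Suppose some proper subset $Y\subsetneq X$ were dependent. Then there is a nontrivial combination $\sum_{i\in Y} w_i a_i=0$.
\item Extending $w$ by zeros gives a nonzero kernel vector with $\supp(w)\subseteq Y\subsetneq X$. This contradicts the minimality of $\supp(v)$.
\end{itemize}
Hence $X$ is a minimal dependent set, that is, $X\in\mathcal{C}(M)$.

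\textbf{($\Leftarrow$)} Let $v\in\ker(A)$ be nonzero with $X=\supp(v)\in\mathcal{C}(M)$. Suppose $w\in\ker(A)$ is nonzero with $\supp(w)\subsetneq X$. Then $\supp(w)$ indexes a set of columns with a nontrivial dependence, so it is a dependent proper subset of the circuit $X$. This is impossible, so $\supp(v)$ is minimal and $v\in S$.

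\textbf{Main obstacle: the hypothesis on $v$ in the converse.} If $v$ is an arbitrary vector with $\supp(v)\in\mathcal{C}(M)$, it need not lie in the kernel at all. For example, over a field larger than $\mathbb{F}_2$, rescaling one coordinate of a kernel vector preserves the support but leaves the kernel. So the statement must be read with $v\in\ker(A)$, and I will make that assumption explicit. Over $\mathbb{F}_2$, as in Lemma~\ref{lemma:spanning-set}, the problem disappears: a vector is determined by its support. The all-ones indicator of a circuit $C$ is in the kernel because the columns of a circuit satisfy a dependence in which, by minimality, every coefficient is nonzero, hence equal to $1$.

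For completeness I will also note the standard uniqueness fact: for a circuit $C$, the kernel vectors supported on $C$ form a one-dimensional space. Given two such vectors that are not proportional, a suitable combination cancels one coordinate of $C$ while remaining nonzero. Its support is then a proper subset of $C$ and is dependent, contradicting that $C$ is a circuit. This makes the correspondence between $S$ and $\mathcal{C}(M)$ exact up to scalar multiples.
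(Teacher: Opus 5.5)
Your proof is correct and has the same two-direction structure as the paper's: a kernel vector gives a dependence on its support, and a dependent proper subset (respectively, a kernel vector with smaller support) contradicts minimality. The difference is in generality.

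The paper writes $Av=\sum_{i\in\supp(v)}a_i$ and uses ``the binary vector with support $C$'' in both directions, so its argument is silently over $\mathbb{F}_2$. It also takes as given, ``by definition'', that the indicator vector of a circuit lies in $\ker(A)$. That is exactly the fact you justify: every coefficient in a dependence on a circuit is nonzero by minimality, hence equal to $1$ over $\mathbb{F}_2$.

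You instead work with arbitrary coefficients over any field. This forces the extra hypothesis $v\in\ker(A)$ in the converse, and you are right that without it the converse can fail over $\mathbb{F}_q$ with $q>2$. Your rescaling example needs a circuit with at least two elements, since a loop's column is zero and any vector supported on a loop is in the kernel.

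The paper's version is shorter in the binary setting of Theorem~\ref{theo:matroid-code-subset}, where a vector is determined by its support. Yours holds over every field, which is what is needed when Lemma~\ref{lemma:matroide-code-params} is invoked over $\mathbb{F}_q$ in Theorem~\ref{theorem:graph_fields}. Your one-dimensionality remark also makes the correspondence between $S$ (up to nonzero scalars) and $\mathcal{C}(M)$ exact.
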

\begin{proof}
    Let $v$ be a member of $S$. Since $v \in \ker(A)$, we have
    \begin{equation}
        \label{eq:kernel_support1}
			Av=\sum_{i=0}^{n-1}a_{i} v_i=\sum_{i \in \supp(v)}a_i=0,
    \end{equation}
    where $a_{i}$ is the $i$th column of $A$, for $i=0,1,\dots,n-1$. The last equality of Equation~\ref{eq:kernel_support1} implies that the columns of $A$ labeled by $\supp(v)$ are linearly dependent, thus $\supp(v) \in \mathcal{D}(M)$, where $\mathcal{D}(M)$ is the collection of dependent sets of $M$. Suppose that $\supp(v) \notin \mathcal{C}(M)$. In this case, there would be $C \subsetneq \supp(v)$ such that $C \in \mathcal{C}(M)$ and we would have
    \begin{equation}
			0=\sum_{i \in C}a_i =\sum_{i=0}^{n-1}a_iw_i=Aw
	\end{equation}
    where $w$ is the binary vector with support $C$, a contradiction, since $\supp(v)$ is minimal. Thus, $\supp(v) \in \mathcal{C}(M)$.

    Conversely, let $C$ be a circuit of $M$. By definition, we have
    \begin{equation}
			\label{eq:kernel_support2}
			0 = \sum_{i \in C}a_i = \sum_{i=0}^{n-1}a_iv_i = Av
	\end{equation}
    where $v$ is the binary vector with support $\supp(v)=C$. By Equation~\ref{eq:kernel_support2},  we see that $v \in \ker(A)$. Suppose that $v \notin S$. In this case, there would be a vector $w \in \ker(A)$ such that $\supp(w) \subsetneq \supp(v)$ and we would have
    \begin{equation}
			Aw=\sum_{i=0}^{n-1}a_iw_i=\sum_{i \in \supp(w)}a_i=0,
	\end{equation}
    where the last equality would imply that $\supp(w) \subsetneq \supp(v) = C$ is a dependent set of $M$, a contradiction, since $C$ is a minimal dependent set. Thus, $v \in S$.
\end{proof}

\begin{theorem}
    \label{theo:matroid-code-subset}
    Let $M_1$ and $M_2$ be $\mathbb{F}_2$-representable matroids with the same ground set $E$ and $H_1$ and $H_2$ representations of $M_1$ and $M_2$, respectively. Let $C_1=\ker(H_1)$ and $C_2=\ker(H_2)$ be two linear codes over $\mathbb{F}_2$. Then, if every circuit of $M_1$ is equal to a disjoint union of circuits of $M_2$, $C_1$ is contained in $C_2$.
\end{theorem}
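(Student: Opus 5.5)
The plan is to prove containment by showing that a spanning set of $C_1$ lies in $C_2$; since $C_2$ is a subspace, closure under addition then gives $C_1\subseteq C_2$. The natural spanning set comes from Lemma~\ref{lemma:spanning-set}: applied to $V=C_1=\ker(H_1)$ over $\mathbb{F}_2$, the set $S_1$ of nonzero vectors of $C_1$ with minimal support spans $C_1$. So it suffices to show $S_1\subseteq C_2$.

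First I identify the elements of $S_1$ via Lemma~\ref{lemma:support_circuit}. Because $H_1$ represents $M_1$, I will assume $M_1$ is, up to the identification of $E$ with the column labels of $H_1$, the vector matroid $M[H_1]$. Then $v\in S_1$ if and only if $\supp(v)\in\mathcal{C}(M_1)$. Over $\mathbb{F}_2$ a vector is determined by its support, so $S_1$ is exactly the set of characteristic vectors $\chi_C$ of circuits $C\in\mathcal{C}(M_1)$. The same identification applies to $H_2$ and $M_2$, and Lemma~\ref{lemma:support_circuit} in its converse direction gives $\chi_D\in\ker(H_2)=C_2$ for every circuit $D\in\mathcal{C}(M_2)$.

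Now let $C\in\mathcal{C}(M_1)$. By hypothesis, $C=D_1\sqcup\cdots\sqcup D_t$ with each $D_j\in\mathcal{C}(M_2)$ and the $D_j$ pairwise disjoint. Disjointness is what makes the characteristic vectors add correctly over $\mathbb{F}_2$: $\chi_C=\chi_{D_1}+\cdots+\chi_{D_t}$, because no coordinate is counted twice and so nothing cancels modulo $2$. Each $\chi_{D_j}$ lies in $C_2$, so $\chi_C\in C_2$. Hence $S_1\subseteq C_2$, and since $S_1$ spans $C_1$, we get $C_1\subseteq C_2$.

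The main obstacle is bookkeeping rather than mathematics. One must make sure the two representations use the same labelling of the ground set $E$, so that $M_i=M[H_i]$ coordinatewise; otherwise the kernels would be compared in permuted coordinates. I would state explicitly that the isomorphisms are the identity on $E$. It is also worth checking that Lemma~\ref{lemma:support_circuit} is used over $\mathbb{F}_2$, where a minimal-support vector is the $0/1$ vector on its support; this is why the theorem is restricted to binary codes.
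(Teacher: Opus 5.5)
Your proposal is correct and takes essentially the same route as the paper. Both arguments span $C_1$ by its minimal-support vectors (Lemma~\ref{lemma:spanning-set}), identify these via Lemma~\ref{lemma:support_circuit} with characteristic vectors of circuits of $M_1$, and write each one as a sum of characteristic vectors of pairwise disjoint circuits of $M_2$, which lie in $C_2$. Your requirement that both representations label their columns by $E$ via the identity map is left implicit in the paper, but it is needed for the statement to make sense.
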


\begin{proof}
    Let $S_1$ and $S_2$ be the sets of non-zero vectors of the null spaces of $H_1$ and $H_2$, respectively, with minimal support. Let $v$ be a vector in $S_1$. From hypothesis $\supp(v)=\bigcup_{i=0}^{k-1}C_i$, where $C_i \in \mathcal{C}(M_2)$ for all $i$, and $C_i\cap C_j=\emptyset$ for all $0\leq i,j \leq k-1$. Since by Lemma~\ref{lemma:support_circuit} $C_i=\supp(w_i)$ for some binary vector $w_i \in S_2$ and the support of $v$ is the union of the supports of all $w_i$ and there is no intersection between them, we can write $\sum_{i=0}^{k-1}w_i$. Therefore, since all $w_i$ belong to $S_2 \subsetneq C_2$ and $C_2$ is closed under vector addition, $v \in C_2$. Thus $S_1 \subsetneq C_2$. By Lemma~\ref{lemma:spanning-set}, $S_1$ is the spanning set of $\ker(H_1)=C_1$ and, since $S_1 \subseteq C_2$, $C_1 \subseteq C_2$.
\end{proof}

Suppose we have a cycle graph $G_1=(V_1,E_1)$ with $n$ edges. Let $G_2=(V_2,E_2)$ be a disconnected graph with $n$ edges whose components are cycle graphs. Now suppose that we construct cycle matroids $M_1$ and $M_2$ from $G_1$ and $G_2$, respectively. Since $|E_1|=|E_2|$, we can make $E(M_1)=E(M_2)=E$ by mapping the edges in $E_1$ and $E_2$ to the same set of elements $E$ and use $E$ as the ground set of $M_1$ and $M_2$. As there is a one-to-one correspondence between cycles of a graph and circuits of the corresponding cycle matroid, clearly all circuits of $M_2$ are contained in the only circuit of $M_1$. Let $C_1$ and $C_2$ be the binary linear codes corresponding to $M_1$ and $M_2$, respectively, that is, the codes that are the null spaces of a representation of the matroids. Then, by Theorem~\ref{theo:matroid-code-subset}, $C_2$ contains $C_1$. Thus, we can use cycle graphs to generate linear codes that can be used to construct CSS codes.

As an example, consider the graphs $G_1$ and $G_2$ in Figure~\ref{fig:grafos}. From them we can construct two matroids $M_1$ and $M_2$ with ground set $E=\{0,1,2,3,4,5\}$ and set of circuits $\mathcal{C}_1=\{\{0,1,2,3,4,5\}\}$ and $\mathcal{C}_2=\{\{0,1,2\},\{3,4,5\}\}$, respectively. A possible representation of $M_1$ is the matrix
\begin{equation}
		\begin{bmatrix}
			0&1&0&0&0&1\\
			0&0&0&0&1&1\\
			0&0&1&0&0&1\\
			0&0&0&1&0&1\\
			1&0&0&0&0&1
		\end{bmatrix},
	\end{equation}
which is also a parity-check matrix of the code $C_1=\{000000,111111\}$, with parameters $[6,1,6]$. The matrix
\begin{equation}
		\begin{bmatrix}
			1&0&0&1&0&0\\
			0&0&1&0&0&1\\
			0&1&0&1&0&0\\
			0&0&0&0&1&1
		\end{bmatrix}
	\end{equation}
is a representation of $M_2$ and a parity-check matrix of the code \linebreak\mbox{$C_2=\{000000,110100,001011,111111\}$} with parameters $[6,2,3]$. Therefore, the code $CSS(C_2,C_1)$ with parameters $[[6,1,2]]$, $d_X=3$ and $d_Z=2$, has check matrix
\begin{equation}
		\begin{bmatrix}
			1&1&1&1&1&1&0&0&0&0&0&0\\
			0&0&0&0&0&0&1&0&0&1&0&0\\
			0&0&0&0&0&0&0&0&1&0&0&1\\
			0&0&0&0&0&0&0&1&0&1&0&0\\
			0&0&0&0&0&0&0&0&0&0&1&1\\
		\end{bmatrix}.
	\end{equation}

	\begin{figure}[htbp] 
		\centering
		\begin{subfigure}[b]{0.45\textwidth}
			\centering
			\includegraphics[width=\textwidth]{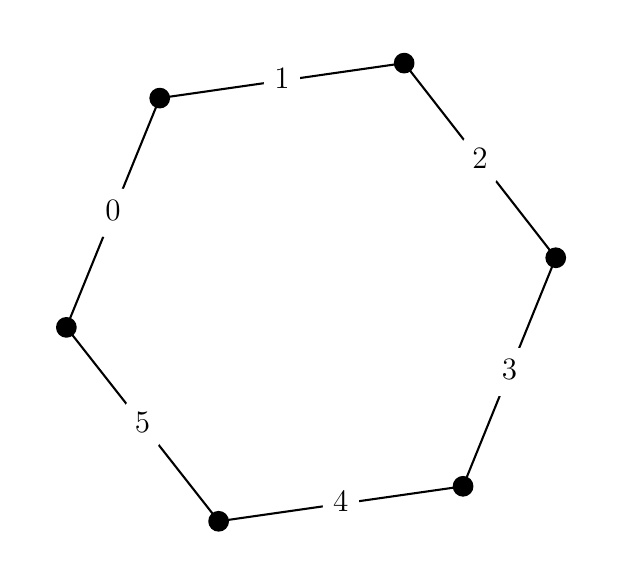}
			\caption{$G_1$}
			\label{fig:subfig1}
		\end{subfigure}
		\hfill
		\begin{subfigure}[b]{0.45\textwidth}
			\centering
			\includegraphics[width=\textwidth]{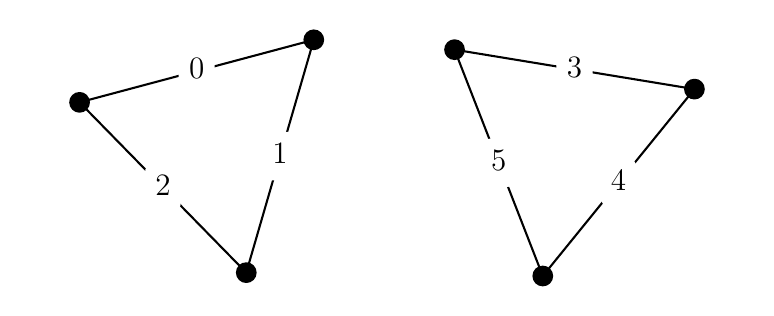}
			\caption{$G_2$}
			\label{fig:subfig2}
		\end{subfigure}
		
		\caption{6-edge graphs}
		\label{fig:grafos}
	\end{figure}

\subsection{Linear Codes from Book Graphs}

In this section, we will show how linear codes can be built from the $2$-sum of two isomorphic book graphs $B_{n,c}$. First, we prove a lemma that relates the properties of a matroid  $M$ to those of the null space of a matrix representation of $M$, which, by definition, is a linear code.

\begin{lemma}
\label{lemma:matroide-code-params}
    Let $M=(E,\mathcal{I})$ be a matroid with rank $r$ and girth $g$, $H$ a representation of $M$ and $C=\ker(H)$ the null space of $H$. Then $C$ is a $[|E|, |E|-r,g]$ linear code.
\end{lemma}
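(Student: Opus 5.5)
We will verify the three parameters of $C=\ker(H)$ separately. Since $H$ represents $M$, after identifying the ground set $E$ with the column labels of $H$, the independent sets of $M$ are exactly the sets of linearly independent columns of $H$. The length is therefore the number of columns of $H$, namely $|E|$.

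\emph{Dimension.} The rank $r$ of $M$ is the size of a maximal independent set, that is, a maximal linearly independent set of columns of $H$. Hence $r$ equals the column rank of $H$, which is the matrix rank of $H$. By rank--nullity, $\dim\ker(H)=|E|-\operatorname{rank}(H)=|E|-r$. The only point to check is that the rank of $M$, defined through bases, is the common size of maximal independent sets; this follows from axiom \textbf{(I3)} and is standard.

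\emph{Minimum distance.} We show that the minimum weight of a nonzero codeword equals $g$.
\begin{enumerate}[label=(\roman*)]
\item Let $v\in C$ be nonzero. Then $\sum_{i\in\supp(v)} v_i a_i=0$ with all coefficients $v_i\neq 0$, so the columns indexed by $\supp(v)$ are dependent. This dependent set contains a circuit, so $w(v)=|\supp(v)|\geq g$.
\item Conversely, let $D$ be a circuit with $|D|=g$. Its columns are dependent, so some nontrivial combination $\sum_{i\in D}\lambda_i a_i$ vanishes. Minimality of $D$ forces every $\lambda_i\neq 0$: otherwise a proper subset of $D$ would be dependent. The vector $\lambda$, padded with zeros outside $D$, is a codeword of weight exactly $g$.
\end{enumerate}
Over $\mathbb{F}_2$ this argument is essentially Lemma~\ref{lemma:support_circuit}. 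The general-field version requires the explicit coefficient argument in (ii), which replaces the $0/1$ vectors used there.

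The main subtlety is the degenerate case in which $M$ has no circuits. Then $C=\{0\}$, the girth is conventionally $\infty$, and the minimum distance is vacuous or conventionally $\infty$, so we will state that convention. Otherwise the three parts together give that $C$ is an $[|E|,|E|-r,g]$ code.
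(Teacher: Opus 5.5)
Your proof is correct. Its skeleton is the same as the paper's: the length is the number of columns of $H$, the dimension comes from $r$ being the rank of $H$ plus rank--nullity, and the distance comes from circuits.

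The difference is in the distance step. The paper takes a minimum-weight codeword $v$, notes that its support is minimal, and invokes Lemma~\ref{lemma:support_circuit} to conclude that $\supp(v)$ is a circuit and hence $d=g$. The inequality $d\le g$, i.e.\ that a smallest circuit is actually the support of a codeword, is left implicit in the converse half of that lemma. However, Lemma~\ref{lemma:support_circuit} is proved with $0/1$ vectors: it writes $Av=\sum_{i\in\supp(v)}a_i$ and uses ``the binary vector with support $C$''. So it is only valid over $\mathbb{F}_2$; over a larger field, a vector whose support is a circuit need not lie in the kernel.

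Your two-sided argument avoids this and works over any field. Part (i) gives $d\ge g$ from the vanishing combination $\sum_{i\in\supp(v)}v_ia_i=0$. Part (ii) takes a smallest circuit and produces a vanishing combination whose coefficients are all nonzero by minimality. This is exactly what is needed where the paper applies the lemma over $\mathbb{F}_q$ in Theorem~\ref{theorem:graph_fields}(ii): for an oriented incidence matrix, the kernel vector supported on a cycle has entries $\pm1$ determined by the orientation, not all $1$. You also handle the circuit-free case explicitly, which the paper does not. The paper's route is shorter because it reuses the earlier lemma; yours is self-contained and field-independent.
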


\begin{proof}
    Clearly the length of the vectors in the null space of $H$ is equal to its number of columns, i.e., the length of $C$ is equal to $|E(M)|$. If $n$ and $k$ are, respectively, the length and the dimension of $C$, the rank of $H$ is $n-k$. In addition, since $H$ is a representation of $M$, the rank of $H$ is also equal to $r_M$. Thus, $k=n-r_M=|E(M)|-r_M$. By definition, the minimum distance $d$ of $C$ is the minimum weight of a non-zero vector in $\ker(H)$. Let $v \in \ker(H)$ be a vector such that $\text{w}(v)=d$. The support of $v$ is clearly minimal and, by Lemma~\ref{lemma:support_circuit}, corresponds to a circuit of $M$. As the support of $v$ is minimal among the non-zero vectors from $\ker(H)$, we have $d=\text{w}(v)=|\supp(v)|=g(M)$. This concludes the proof.
    
\end{proof}

\begin{theorem}
\label{theorem:graph_fields}
    Let $G$ be a graph and $D$ an arbitrary orientation of $G$. Let $A_D$ an incidence matrix of $D$ in which the columns corresponding to loops have only zeroes. Then the following statements about $A_D$ are true:
    \begin{enumerate}[label=(\roman*)]
        \item $A_D$ can be interpreted as a matrix over the field $\mathbb{F}_q$ for any $q=p^m$, where $p\geq 2$ is a prime number, having as entries the multiplicative and additive identities and $p-1$.
        \item The null space $C$ of $A_D$ is an $[|E|, |E|-|V|+\omega,g]$ linear code over $\mathbb{F}_q$, where $E$ is the set of edges of $G$, $V$ is the set of vertices of $G$, $\omega$ is the number of connected components of $G$ and $g$ is the girth of $G$.
        \item If $H$ can be obtained from $A_D$ through row reduction, then its entries are the same from $A_D$, that is, each entry can be only $-1$, $0$ or $1$.
        \item The null space of a matrix over $\mathbb{F}_{q'}$, with $q'={(p')}^{m'}$ and $p'\neq p$, obtained by replacing $p-1$ by $p'-1$ is a linear code with the same parameters as $C$.
    \end{enumerate}
\end{theorem}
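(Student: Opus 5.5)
The plan is to treat the four items in order, using Proposition~\ref{prop:rep-graphic} and Lemma~\ref{lemma:matroide-code-params} as the main tools.

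For (i), the entries of $A_D$ are $0$, $1$ and $-1$. In any field $\mathbb{F}_q$ with $q=p^m$, the element $-1$ is the additive inverse of $1$. Since the prime subfield is $\mathbb{F}_p\subseteq\mathbb{F}_q$, it is the residue $p-1$. So $A_D$ is a well-defined matrix over $\mathbb{F}_q$ whose entries lie in $\{0,1,p-1\}$; for $p=2$ this is just the undirected incidence matrix.

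For (ii), Proposition~\ref{prop:rep-graphic} says that $A_D$ represents $M(G)$ over every field, in particular over $\mathbb{F}_q$. By Lemma~\ref{prop:rank-cyclic}, $r_{M(G)}=|V|-\omega$. Circuits of $M(G)$ are exactly the edge sets of cycles of $G$ (Proposition~\ref{prop:cycle-matroid}), so $g(M(G))=g(G)$. Lemma~\ref{lemma:matroide-code-params} then gives $C=\ker(A_D)$ with parameters $[|E|,|E|-|V|+\omega,g]$.

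One gap must be closed here. Lemma~\ref{lemma:support_circuit} was stated with binary vectors $w$ having support $C$, and over $\mathbb{F}_q$ we need a vector with support exactly a circuit. For a cycle of $G$, the vector is obtained by walking around the cycle and giving each arc the coefficient $+1$ or $-1$ according to whether it is traversed forward or backward. This vector lies in $\ker(A_D)$ and has support exactly the cycle. Conversely, the support of any nonzero kernel vector is a dependent set, so it contains a circuit. Together these give minimum weight $=g$ over any field, which is the step of Lemma~\ref{lemma:matroide-code-params} I would re-verify in this generality.

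For (iii), I would use the fact that $A_D$ is totally unimodular: every square submatrix of a directed incidence matrix has determinant in $\{0,\pm1\}$. This follows by the standard induction: a column with a single nonzero entry allows cofactor expansion, a column with no nonzero entry gives determinant $0$, and if every column has both a $+1$ and a $-1$ then the rows sum to zero. Row reduction to reduced echelon form with respect to a basis $B$ of columns yields $A_B^{-1}A_D$. By Cramer's rule, its entries are ratios of minors of $A_D$ with denominator $\pm1$, hence lie in $\{0,\pm1\}$. Graph-theoretically, $B$ is a spanning forest, and the entries record whether, and in which direction, the fundamental cycle of an edge passes through a given tree edge. The delicate point, which I expect to be the main obstacle, is that "obtained through row reduction" must be read as the reduced form relative to a column basis (possibly after deleting zero rows). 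Arbitrary intermediate row operations can create other values, so I would state and prove the claim for that canonical form.

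For (iv), replacing $p-1$ by $p'-1$ means reinterpreting the same integer $\{0,\pm1\}$ matrix over $\mathbb{F}_{q'}$. This is again an incidence matrix of the orientation $D$ over $\mathbb{F}_{q'}$. Applying (ii) with $q'$ gives the same length, dimension and distance $[|E|,|E|-|V|+\omega,g]$, since these depend only on $G$ and not on the field.
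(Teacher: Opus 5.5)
Your proposal is correct, and it differs from the paper mainly in item (iii).

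\textbf{Item (iii).} The paper argues operation by operation. Since each column of $A_D$ has, besides zeros, only a $1$ and a $p-1$, it concludes that no scaling is needed and that every row addition yields only $0+1$, $0+(p-1)$ or $1+(p-1)=0$. That invariant is lost after the first addition. If a column has $1,0,-1$ in rows $i,j,l$, adding row $i$ to row $j$ to clear some other column gives it two $1$'s. After that, subtractions are required, and a careless addition produces $2$. So the paper's argument is only heuristic.

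Your route via total unimodularity and Cramer's rule identifies every entry of the reduced matrix as a minor of $A_D$ divided by a determinant equal to $\pm 1$. This is rigorous, and it explains the entries as fundamental-cycle incidences relative to a spanning forest. It also makes explicit, as you rightly insist, that the claim concerns the canonical reduced form rather than arbitrary row operations.

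One precision is needed. $A_B$ is $|V|\times(|V|-\omega)$, so it has no inverse as written. First delete one row per connected component; each such row is minus the sum of the other rows of its component, in every characteristic. This gives a full-row-rank $A'$ with $A'_B$ square, and the nonzero rows of the reduced form are then $(A'_B)^{-1}A'$.

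\textbf{Item (ii).} The paper just cites Lemma~\ref{lemma:matroide-code-params}. Its proof goes through Lemma~\ref{lemma:support_circuit}, and that proof uses $0/1$ vectors, so it only works over $\mathbb{F}_2$. Your signed cycle vector, together with the observation that every nonzero kernel vector has dependent support, is exactly what justifies $d=g$ in odd characteristic.

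\textbf{Item (iv).} The paper takes ``the matrix'' to be the reduced matrix $H$. It argues that $H$ with $p-1$ replaced by $p'-1$ is the reduced matrix in characteristic $p'$. You take it to be $A_D$ itself, which makes (iv) immediate from (ii).

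Your (iii) argument also covers the paper's reading with one more sentence. The matrix $A'$, the pivot set $B$ (the lexicographically first basis of $M(G)$), and $\det A'_B=\pm 1$ are all independent of the characteristic. Hence the integer matrix $(A'_B)^{-1}A'$, with entries in $\{0,\pm 1\}$, reduces modulo every prime to the reduced form in that characteristic, and its kernel over $\mathbb{F}_{q'}$ is $\ker A_D$.
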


\begin{proof}
\begin{enumerate}[label=(\roman*)]
    \item Since $A_D$ has only $0$, $1$, and $-1$ as entries, we can interpret it as a matrix over the field $\mathbb{F}_{q}$ in which $0$ is the additive identity, $1$ is the multiplicative identity and $-1$ is the additive inverse of $1$. Thus, $-1$ corresponds to $p-1$. This proves $(i)$.

   \item By Proposition~\ref{prop:rep-graphic}, $A_D$ represents the matroid $M(G)$ over $\mathbb{F}_q$ and by Lemma~\ref{prop:rank-cyclic}, $r_{M(G)}=|V|-\omega$. Thus, by Lemma~\ref{lemma:matroide-code-params}, $C=\ker(A_D)$ is a $[|E|, |E|-|V|+\omega,g]$ code.

    \item Row reduction uses the row swap, multiplication of a row by a scalar and the sum of a row to the multiple of other. To prove $(iii)$, we need to show that each one of this operations adds no different entry or is not necessary. Row swap clearly does not change the matrix entries. The multiplication by a scalar is necessary in row reduction to generate the additive inverse of the row to be made 0, but in a column, in a addition to zeros, there are only a $1$ and a $p-1$, that are additive inverses of each other in a field with characteristic $p$. thus the multiplication by scalar is not necessary. Finally, we have the sum of a row to the multiple of other, which we will consider simply as the sum of two row, since we do not need the multiple. In the sum of two rows, in each column we will have the sum of a $0$ to $1$ or $p-1$, that does not generate a new entry, or the sum of $1$ to $p-1$, that results in 0. Thus, row reduction adds no new entry.

    \item Let $C=\ker(H)$ be a linear code over a field with characteristic $p$, where $H$ is obtained from $A_D$ through row reduction. Let $C'=\ker(H')$ be a linear code where $H'$ is the reduced matrix of $A_D$ over a field with characteristic $p'$. Notice that all possible sums in the row reduction of $A_D$, $0+1$, $0+(-1)$ and $1+(-1)$, seen in the proof of Item $(iii)$, will result in one of the original entries and does not depend on the characteristic of the field. Thus, if we write $p-1$ and $p'-1$ as $-1$ in $H$ and $H'$, respectively, we have $H\approx H'$, that is, the entries in the same row and column in $H$ and $H'$ are written the same way despite representing different values in their respective fields. Therefore, replacing $p-1$ by $p'-1$ in $H$, we get $H'$. Since $C'=\ker(H')=\ker(A_D)$ ($A_D$ over a field with characteristic $p'$) and $C=\ker(H)=\ker(A_D)$ ($A_D$ over a field with characteristic $p$), $C$ and $C'$ have the same parameters by Item $(ii)$.
\end{enumerate}
   
\end{proof}

In the following proposition, we prove some properties of book graphs. They will be useful to determine the parameters of the codes obtained by the 2-sum of book graphs.

\begin{proposition}
\label{prop:book-graph-props}
    A book graph $B_{n,c}$ has the following properties:

    \begin{enumerate}
        \item $|V(B_{n,c})|=c(n-2)+2$.
        \item $|E(B_{n,c})|=c(n-1)+1$.
        \item $g(B_{n,c})=n$.
    \end{enumerate}
\end{proposition}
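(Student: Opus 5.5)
I would begin by fixing an explicit description of $B_{n,c}$ and then read off all three counts from it. Let $e=(u,v)$ be the edge shared by the $c$ cycles of length $n$. Each such cycle consists of $e$ together with a $u$--$v$ path $P_j$ ($j=1,\dots,c$) of length $n-1$. Its internal vertices number $n-2$. Because the cycles share only the single edge $e$ and its two endpoints, the paths $P_1,\dots,P_c$ are internally vertex-disjoint and edge-disjoint. Moreover, $B_{n,c}$ is the union of $e$ and these paths. I would take this union description as the working form of the definition.

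\textbf{Vertex and edge counts.} Item 1 follows by counting the two shared vertices $u,v$ plus the $c(n-2)$ internal vertices of the paths, giving $c(n-2)+2$. Item 2 follows by counting the shared edge $e$ plus the $c(n-1)$ edges of the paths, giving $c(n-1)+1$.

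\textbf{Girth.} Item 3 needs two inequalities. The inequality $g\le n$ is immediate, since each $e\cup P_j$ is a cycle of length $n$ by definition. For $g\ge n$, I would classify an arbitrary cycle $Z$ of $B_{n,c}$.
\begin{itemize}
\item Every internal vertex of every $P_j$ has degree $2$. Hence if $Z$ uses an edge of $P_j$, it must traverse all of $P_j$ from $u$ to $v$.
\item So $Z$ is a union of whole paths $P_j$, possibly together with $e$, and every vertex of $Z$ has degree $2$ in $Z$.
\item Since $u$ and $v$ have degree $2$ in $Z$, either $Z=e\cup P_j$, of length $n$, or $Z=P_i\cup P_j$ with $i\ne j$, of length $2(n-1)$.
\end{itemize}
For $n\ge 2$ we have $2(n-1)\ge n$, so every cycle has length at least $n$, and therefore $g=n$.

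\textbf{Main obstacle.} The hardest step is the classification of cycles, in particular justifying rigorously that a cycle meeting an internal vertex must contain its whole path. I would handle this with the degree-$2$ argument above. I would also note the implicit assumption $n\ge 3$, so that the graph is simple and the paths are nontrivial.
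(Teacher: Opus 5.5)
Your proof is correct, and items 1 and 2 are counted exactly as in the paper: two shared vertices plus $n-2$ private vertices per cycle, and one shared edge plus $n-1$ private edges per cycle.

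For item 3 your route differs from the paper's. The paper argues by induction on $c$. It builds $B_{n,c+1}$ from $B_{n,c}$ by attaching a new $(n-1)$-edge path between the endpoints of the shared edge. It then shows that any newly created cycle must run through the whole new path and leave it via an edge of $B_{n,c}$, so it has length at least $n$; old cycles are covered by the induction hypothesis.

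You use the same local fact once, globally: internal page vertices have degree $2$, so a cycle touching a page traverses all of it. Combined with the degree count at $u$ and $v$, this gives the complete list of cycles. There are $c$ cycles $e\cup P_j$ of length $n$ and $\binom{c}{2}$ cycles $P_i\cup P_j$ of length $2(n-1)$.

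This avoids induction and gives strictly more than the girth. Since the $2$-sum along $e$ deletes $e$, your list shows that its circuits all have $2n-2$ elements. These circuits are the $P_i\cup P_j$ inside either copy and the $P_i\cup P'_j$ across the two copies. That is exactly what is needed for the distance $2n-2$ in Theorem~\ref{theorem:new_codes}, which the paper asserts without justification; the paper's inductive argument only bounds new cycles from below.

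Your explicit adoption of the union description, with pages meeting only in $u$, $v$ and $e$, is worth keeping. The paper relies on the same assumption implicitly when it says that sharing one edge means sharing exactly two vertices.
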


\begin{proof}
    \begin{enumerate}
        \item All cycles sharing exactly one edge means they share exactly two vertices, thus each one has $n-2$ exclusive vertices. Since $B_{n,c}$ has $c$ cycles, there are $c(n-2)$ of these exclusive vertices. Thus, adding the two shared vertices, the graph has a total of $c(n-2)+2$ vertices.
        \item Since all cycles shared exactly one edge, each one has $n-1$ exclusive edges. Since $B_{n,c}$ has $c$ cycles, there are $c(n-1)$ of these exclusive edges. Thus, adding the shared edge, the graph has a total of $c(n-1)+1$ edges.
        \item We will prove this by induction on the number of cycles. Consider the base case with just one cycle, i.e., $B_{n,1}$. Since it is a $n$-edge cycle graph, its girth is clearly $n$. Next, assume that $g(B_{n,c})=n$ holds. We must show that $g(B_{n,c+1})=n$. Let $e=(v_1,v_2)$ be the edge shared by the $c$ cycles in $B_{n,c}$. To obtain $g(B_{n,c+1})=n$ from $B_{n,c}$, we can add a $(n-1)$-edge path $P=u_1,e_1,u_2,e_2,\dots ,e_{n-1},u_n$ and merge $v_1$ with $u_1$ and $v_2$ with $u_n$. Notice that $v_1,e_1,u_2,e_2,\dots,e_{n-1}v_2,e,v_1$ is a $n$-edge cycle. Now, suppose the addition of $P$ creates a cycle $C=x_1,f_1,x_2,f_2,\dots,x_m,f_m,x_1$ with $m < n$. Since $C$ does not exist in $B_{n,c}$, at least one edge in $C$ must be in $P$. Suppose edge $f_i$, for $1 \leq i \leq m$, belongs to $P$. Suppose we walk along $C$ from $f_i$ in both directions. Since $P$ is a path, it cannot contain $C$, so our walk should pass by an edge from $B_{n,c}$ before returning to $f_i$. But the way we constructed $B_{n,c+1}$, the only connection between $B_{n,c}$ and $P$ are through vertices $v_1$ and $v_2$, so our walk will certainly pass by $v_1$ in one direction, after $i-1$ edges and by $v_2$, after $n-1-i$ edges. With this, we traversed $n-1$ edges without covering the whole cycles, implying that $m>n-1$, a contradiction.

    \end{enumerate}
\end{proof}

\begin{theorem}
\label{theorem:new_codes}
    There exist linear codes with the greatest possible minimum distance for its corresponding length and dimension, which are constructed by means of the null space of a representation of a $2$-sum of the cycle matroids of two isomorphic book graphs.
\end{theorem}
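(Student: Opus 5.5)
The plan is to make the statement explicit. I will compute the parameters of the code from a $2$-sum of two copies of $B_{n,c}$ for general $n$ and $c$, then specialise to four small cases and check optimality against known upper bounds.

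\textbf{Setup.} Let $M=M(B_{n,c})$ and let $M'$ be an isomorphic copy, glued along the edge $p$ common to all $c$ cycles. For $n\ge 3$, $B_{n,c}$ is $2$-connected, so $M$ is a connected matroid. Hence $p$ is not a separator and $M\oplus_2 M'$ is defined. It is also graphic: it is the cycle matroid of the $2$-sum graph. So by Proposition~\ref{prop:rep-graphic} an incidence matrix of an orientation of that graph represents it over $\mathbb{F}_2$, and Lemma~\ref{lemma:matroide-code-params} applies.

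\textbf{Length and dimension.} The ground set has $2(c(n-1)+1)-2=2c(n-1)$ elements, by Proposition~\ref{prop:book-graph-props}. Each copy has rank $c(n-2)+1$, by Lemma~\ref{prop:rank-cyclic} and Proposition~\ref{prop:book-graph-props}. Proposition~\ref{prop:2srank} then gives $r=2c(n-2)+1$, so the dimension is $2c(n-1)-r=2c-1$.

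\textbf{Girth.} The circuits of the $2$-sum are of three kinds.
\begin{enumerate}[label=(\alph*)]
\item Circuits of $M\setminus p$. These are cycles of $B_{n,c}$ avoiding the spine $p$. Such a cycle must consist of two pages, because the only connections between pages are the two spine vertices. Its length is therefore $2(n-1)$.
\item Circuits of $M'\setminus p$, which have length $2(n-1)$ by the same argument.
\item Sets $(C_1\setminus p)\cup(C_2\setminus p)$. Here each $C_i$ is a cycle through $p$, i.e.\ $p$ together with a single page, so $|C_i|=n$ by part~3 of Proposition~\ref{prop:book-graph-props}. These circuits have length $(n-1)+(n-1)=2n-2$.
\end{enumerate}
Hence the girth is $2n-2$, and Lemma~\ref{lemma:matroide-code-params} yields a binary $[2c(n-1),\,2c-1,\,2n-2]$ code.

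\textbf{Specialisation.} The choices
\[(n,c)=(3,2),\ (3,3),\ (4,2),\ (5,2)\]
give $[8,3,4]_2$, $[12,5,4]_2$, $[12,3,6]_2$ and $[16,3,8]_2$ respectively.

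\textbf{Optimality.} This is the main obstacle. We must show that no binary code with the same length and dimension has larger minimum distance. The first tool is the Griesmer bound $n\ge\sum_{i=0}^{k-1}\lceil d/2^i\rceil$.
\begin{itemize}[label={}]
\item $[8,3,5]$ would need length $5+3+2=10>8$, so it does not exist.
\item $[12,3,7]$ would need length $7+4+2=13>12$, so it does not exist.
\item $[16,3,9]$ would need length $9+5+3=17>16$, so it does not exist.
\end{itemize}
The hard case is $[12,5,5]$, where the Griesmer bound is met with equality, so it does not rule the code out. Here I would first try to extend by a parity bit to a putative $[13,5,6]$ even code and take a residual code with respect to a minimum-weight word. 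This gives a $[7,4,\ge 3]$ code, which exists (the Hamming code), so the residual argument alone does not finish the case. A finer analysis of the weight enumerator, or a linear-programming bound, is then needed. Failing a clean argument, I would appeal to the established tables of bounds (Grassl~\cite{Grassl:codetables}), which list $4$ as the maximum minimum distance for binary codes of length $12$ and dimension $5$.
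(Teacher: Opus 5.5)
Your proof is correct and uses the paper's construction. Both arguments compute length $2c(n-1)$, dimension $2c-1$ and minimum distance $2n-2$ for the $2$-sum along the spine, via Proposition~\ref{prop:2srank} and Lemma~\ref{lemma:matroide-code-params}. Both then specialise to $(n,c)=(3,2),(3,3),(4,2),(5,2)$.

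You differ from the paper in the steps it handles only informally:
\begin{itemize}
\item You check that $p$ is not a separator, which the paper omits.
\item Your three-type circuit enumeration proves that the girth is $2n-2$. The paper only remarks that taking the spine as $p$ destroys the $n$-cycles.
\item You certify optimality with the Griesmer bound. The paper just asserts optimality for all four codes, presumably from tables.
\end{itemize}

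Your worry about $[12,5,5]$ is not a gap for the theorem as stated. The theorem is an existence claim, so any one of your Griesmer-certified codes $[8,3,4]_2$, $[12,3,6]_2$, $[16,3,8]_2$ already proves it. Citing Grassl's tables for $[12,5,4]_2$ is no weaker than what the paper does.

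If you want that case self-contained, the Pless power moments settle it. Suppose a $[12,5,5]$ code exists and extend it by a parity bit to an even $[13,5,6]$ code. That code has:
\begin{itemize}
\item no zero coordinate, since deleting one would give a $[12,5,6]$ code, violating Griesmer ($6+3+2+1+1=13$);
\item no word of weight $10$, since $10<2d$ means its residual would be a $4$-dimensional code of length $3$;
\item at most one word of weight $12$, since two such words would be at distance at most $2$.
\end{itemize}
Let $B_2$ be the number of weight-$2$ words in the dual code. The equations $\sum_{i>0}A_i=31$, $\sum_i iA_i=2^4\cdot 13$ and $\sum_i i^2A_i=2^3(13\cdot 14+2B_2)$ then force $B_2=-2$ if $A_{12}=0$, or $B_2=-1/2$ if $A_{12}=1$. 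Both are impossible.
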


\begin{proof}
    Consider the cycle matroid $M(B_{n,c})$. By Proposition~\ref{prop:book-graph-props}, it is easy to see that the number of elements of $M(B_{n,c})$ is $c(n-1)+1$, its rank is $c(n-2)+1$ and its girth is $n$. Now consider the $2$-sum $N=M(B_{n,c})\oplus_{2}M(B_{n,c})$ where $p$ is the edge shared by the cycles of $B_{n,c}$. By Proposition~\ref{prop:2srank}, we know that $r_N=2c(n-2)+1$. Let $C(N)$ be a linear code such that $N$ is the vector matroid of its parity-check matrix. Then, $C(N)$ has length $2c(n-1)$, dimension $2c-1$ and distance $2n-2$.

    Consider the collection of circuits  $\mathcal{C}$ of the 2-sum of two matroids $M_1$ and $M_2$, given by
    \begin{equation}
        \mathcal{C}=\mathcal{C}(M_1\setminus p) \cup \mathcal{C}(M_2\setminus p) \cup \{C_1 \cup C_2:p\in C_1 \in \mathcal{C}(M_1),p\in C_2 \in \mathcal{C}(M_2)\}.
    \end{equation}
    From this, we can see that the girth of the resulting matroid will be greater than $g(M_1)$ and $g(M_2)$ if the element $p$ is not in the smallest circuits of $M_1$ and $M_2$, since, in this case, these circuits will be preserved. To prevent this and increase the girth through the 2-sum, we used book graphs, since their all smallest cycles share a single edge and using the corresponding matroid element as $p$, eliminates the smallest cycles in the 2-sum, allowing the girth to be increased.

   In the case of the graphs $B_{3,2}$, $B_{3,3}$, $B_{4,2}$, $B_{5,2}$, the binary codes constructed have the greatest possible minimum distance for their length and dimension. In addition, with the exception of the one based on  $B_{4,2}$, all of them are self-orthogonal, which means they can also be used to construct CSS codes. \end{proof}

In Table~\ref{tab:cod-book-graph}, we exhibit new codes that we constructed in this paper which have maximum minimum distance, namely, $C_2$, $C_3$, $C_4$. Moreover, the code $C_1$ is not new but it was constructed by a distinct method than the available in the literature. In Figure~\ref{fig:generator-matrices}, we exhibit the generator matrices of these codes.

    \begin{table}
    \centering
    \begin{tabular}{c|c|c}
        Code & Graph & Parameters\\
        \hline
        $C_1$ & $B_{3,2}$ & $[8,3,4]$\\
        $C_2$ & $B_{3,3}$ & $[12,5,4]$\\
        $C_3$ & $B_{4,2}$ & $[12,3,6]$\\
        $C_4$ & $B_{5,2}$ & $[16,3,8]$\\
        \hline
    \end{tabular}
    \caption{Linear codes based on the 2-sum of book graphs.}
    \label{tab:cod-book-graph}
\end{table}

 \begin{figure}[htbp]
    \centering

    \begin{subfigure}{0.9\columnwidth}
        \centering
        \begin{equation*}
            \begin{bmatrix}
                1 & 0 & 1 & 0 & 0 & 1 & 0 & 1 \\
                0 & 1 & 0 & 1 & 0 & 1 & 0 & 1 \\
                0 & 0 & 0 & 0 & 1 & 1 & 1 & 1 \\
            \end{bmatrix}
        \end{equation*}
        \caption{Code $C_1$}
        \label{fig:code_c1}
    \end{subfigure}

    \vspace{0.5cm}

    \begin{subfigure}{0.9\columnwidth}
        \centering
        \begin{equation*}
            \begin{bmatrix}
                1 & 0 & 0 & 0 & 1 & 0 & 0 & 0 & 1 & 0 & 0 & 1 \\
                0 & 1 & 0 & 1 & 0 & 0 & 0 & 0 & 1 & 0 & 0 & 1 \\
                0 & 0 & 1 & 0 & 0 & 1 & 0 & 0 & 1 & 0 & 0 & 1 \\
                0 & 0 & 0 & 0 & 0 & 0 & 1 & 0 & 1 & 0 & 1 & 1 \\
                0 & 0 & 0 & 0 & 0 & 0 & 0 & 1 & 1 & 1 & 0 & 1 \\
            \end{bmatrix}       
        \end{equation*}
        \caption{Code $C_2$}
        \label{fig:code_c2}
    \end{subfigure}

    \vspace{0.5cm}

    \begin{subfigure}{0.9\columnwidth}
        \centering
        \begin{equation*}
            \begin{bmatrix}
                1 & 1 & 0 & 1 & 0 & 0 & 0 & 0 & 1 & 0 & 1 & 1 \\
                0 & 0 & 1 & 0 & 1 & 1 & 0 & 0 & 1 & 0 & 1 & 1 \\
                0 & 0 & 0 & 0 & 0 & 0 & 1 & 1 & 1 & 1 & 1 & 1 \\
            \end{bmatrix}
        \end{equation*}
        \caption{Code $C_3$}
        \label{fig:code_c3}
    \end{subfigure}

    \vspace{0.5cm}

    \begin{subfigure}{0.9\columnwidth}
        \centering
        \begin{equation*}
            \begin{bmatrix}
                1 & 0 & 0 & 0 & 1 & 0 & 1 & 1 & 0 & 1 & 1 & 1 & 0 & 1 & 0 & 0 \\
                0 & 1 & 1 & 1 & 0 & 1 & 0 & 0 & 0 & 1 & 1 & 1 & 0 & 1 & 0 & 0 \\
                0 & 0 & 0 & 0 & 0 & 0 & 0 & 0 & 1 & 1 & 1 & 1 & 1 & 1 & 1 & 1 \\
            \end{bmatrix}
        \end{equation*}
        \caption{Code $C_4$}
        \label{fig:code_c4}
    \end{subfigure}

    \caption{Generator matrices of codes $C_1$ to $C_4$}
    \label{fig:generator-matrices}
\end{figure}

\section{Final Remarks}\label{fr}

In this paper, we have shown a new method to construct linear codes by means of the $2$-sum of two isomorphic book graphs and, applying such a method, we found four binary codes with the greatest minimum distance possible for their length and dimension. We have also shown how to generalize codes obtained from incidence matrices to finite fields with any characteristic, keeping the same length, dimension and minimum distance, which gives us great flexibility when generating codes from graphs. Moreover, we have shown how to determine in which cases two linear codes corresponding to two matroids can be utilized to construct a quantum CSS code. This gives us a way to construct new families of quantum codes using matroid theory. In addition, we have shown a way to construct CSS codes from cycle graphs using this result.

\FloatBarrier

\section*{Acknowledgements}

The authors thank R. Portugal for helpful discussions.
A.S.B.~acknowledges the financial support from CAPES Process No. 88887.670714/2022-00.
F.L.M.~acknowledges financial support from CNPq Grant
No. 06049/2025-2.
G.G.LaG.~acknowledges financial support from CNPq Grant
No. 302984/2022-4.
The authors acknowledge the financial support of the National Institute of Science and Technology for Applied
Quantum Computing through CNPq process No. 408884/2024-0.

\paragraph{Conflict of interest.} The authors declare no conflicts of interest.

\paragraph{Data availability.} No new data were created or analyzed in this study. This is purely a theoretical work, and all mathematical derivations and conclusions are explicitly detailed within the manuscript.

\bibliographystyle{abbrv}
\bibliography{referencias}
\end{document}